\title{Redundancy rules for MaxSAT\thanks{An extended abstract of this article has appeared in proceedings of SAT'25 \cite{BBBL-SAT-25}.
Ilario Bonacina and Maria Luisa Bonet have been supported by grants PID2022-138506NB-C21  and PID2022-138506NB-C22 (PROOFS BEYOND) funded by AEI. Massimo Lauria has been supported by the project PRIN 2022 ``Logical Methods in Combinatorics'' N.~2022BXH4R5 of the Italian Ministry of University and Research (MIUR).}}
\author{
%\orcidlinki{Ilario Bonacina}{0000-0002-5697-8070}
Ilario Bonacina
\\ \small Universitat Polit\`ecnica de Catalunya
\\ \small Barcelona, Spain
\\ \small \texttt{ilario.bonacina@upc.edu}
\and
%\orcidlinki{Maria Luisa Bonet}{0000-0003-1646-7177}
Maria Luisa Bonet
\\ \small Universitat Polit\`ecnica de Catalunya
\\ \small Barcelona, Spain
\\ \small \texttt{bonet@cs.upc.edu}
\and
%\orcidlinki{Sam Buss}{0000-0003-3837-334X}
Sam Buss
\\ \small University of California
\\ \small San Diego, USA
\\ \small \texttt{sbuss@ucsd.edu}
\and
%\orcidlinki{Massimo Lauria}{0000-0003-4003-3168}
Massimo Lauria
\\ \small Sapienza Universit\`a di Roma
\\ \small Rome, Italy
\\ \small \texttt{massimo.lauria@uniroma1.it}
}
\theoremstyle{definition}
\newtheorem{defi}{Definition}[section]
\theoremstyle{plain}
\newtheorem{lem}[defi]{Lemma}
\newtheorem{obs}[defi]{Observation}
\newtheorem{thm}[defi]{Theorem}
\newtheorem{cor}[defi]{Corollary}
\newtheorem{prop}[defi]{Proposition}
\theoremstyle{remark}
\theoremstyle{rem}
\newtheorem{rem}[defi]{Remark}
\crefname{thm}{Theorem}{Theorems}
\Crefname{thm}{Theorem}{Theorems}
\crefname{defi}{Definiton}{Definitions}
\Crefname{defi}{Defintion}{Definitions}
\crefname{rem}{Remark}{Remarks}
\Crefname{rem}{Remark}{Remarks}
\crefname{cor}{Corollary}{Corollaries}
\Crefname{cor}{Corollary}{Corollaries}
\crefname{lem}{Lemma}{Lemmas}
\Crefname{lem}{Lemma}{Lemmas}
\crefname{fact}{Fact}{Facts}
\Crefname{fact}{Fact}{Facts}
\crefname{obs}{Observation}{Observations}
\Crefname{obs}{Observation}{Observations}
\crefname{prop}{Proposition}{Proposition}
\Crefname{prop}{Proposition}{Proposition}
\DeclareMathOperator{\dom}{dom}
\newcommand{\biglor}{\bigvee}
\newcommand{\ltrue}{1}
\newcommand{\lfalse}{0}
\newcommand{\Bool}{\{\lfalse,\ltrue\}}
\renewcommand{\models}{\vDash}
\newcommand{\olnot}[1]{\overline{#1}}
\newcommand{\cost}[1]{\mathrm{cost}(#1)}
\newcommand{\assign}[1]{\overline{#1}}
\newcommand{\LPR}{\ensuremath{\mathrm{LPR}}\xspace}
\newcommand{\RAT}{\ensuremath{\mathrm{RAT}}\xspace}
\newcommand{\SPR}{\ensuremath{\mathrm{SPR}}\xspace}
\newcommand{\PR}{\ensuremath{\mathrm{PR}}\xspace}
\newcommand{\SR}{\ensuremath{\mathrm{SR}}\xspace}
\newcommand{\CSPR}{\ensuremath{\mathrm{cost\text{-}SPR}}\xspace}
\newcommand{\CSR}{\ensuremath{\mathrm{cost\text{-}SR}}\xspace}
\newcommand{\CPR}{\ensuremath{\mathrm{cost\text{-}PR}}\xspace}
\newcommand{\CLPR}{\ensuremath{\mathrm{cost\text{-}LPR}}\xspace}
\newcommand{\CBC}{\ensuremath{\mathrm{cost\text{-}BC}}\xspace}
\newcommand{\VPB}{\ensuremath{\mathrm{veriPB}}\xspace}
\newcommand{\assignvalue}[3]{\ensuremath{\mathsf{Value}_{#1}({#2} \mapsto {#3})}}
\newcommand{\PHP}[2]{\mathsf{PHP}^{#1}_{#2}}
\newcommand{\BPHP}[2]{\mathsf{bPHP}^{#1}_{#2}}
\def\liff{\leftrightarrow}
\renewcommand{\lnot}{\overline}
\def\Var{{\mathrm{Var}}}
\def\rest{|_}   % Restrictions
\def\ExtVars{\mathsf{Ext}}
\def\Inj{\mathsf{Inj}}
\def\BB{\Delta_1}
\def\PP{\Delta_2}
\newcommand{\FF}[2]{\mathsf{F}^{#1}_{#2}}
\newcommand{\smalltitle}[1]{\paragraph*{#1}}
\def\Plus{\raisebox{0.2ex}{\ensuremath{+}}\xspace}
\newcommand{\ie}{\textit{i}.\textit{e}.,\ }
\newcommand{\wrt}{w.r.t.\ }
\newcommand{\eg}{\textit{e}.\textit{g}.,\ }
\def\cleartheorem#1{%
    \expandafter\let\csname#1\endcsname\relax
    \expandafter\let\csname c@#1\endcsname\relax
}
\DeclareMathOperator{\flip}{flip}
\DeclareMathOperator{\HD}{HammingDistance}
\begin{document}

\maketitle

\begin{abstract}
The concept of redundancy in SAT leads to more expressive and
powerful proof search techniques, e.g., able to express various
inprocessing techniques, and originates interesting hierarchies of proof
systems [Heule~\textit{et.al}'20, Buss-Thapen'19].
Redundancy has also been integrated in MaxSAT
[Ihalainen~\textit{et.al}'22, Berg~\textit{et.al}'23,
Bonacina~\textit{et.al}'24].

In this paper, we define a structured hierarchy of redundancy proof systems for MaxSAT, with the goal of studying its proof complexity. We obtain MaxSAT variants of proof systems such as \SPR, \PR, \SR, and others, previously defined for SAT.

All our rules are polynomially checkable, unlike [Ihalainen~\textit{et.al}'22]. Moreover, they are simpler and weaker than [Berg~\textit{et.al}'23],  and possibly amenable to lower bounds.
This work also complements the approach of
[Bonacina~\textit{et.al}'24]. Their proof systems use different rule
sets for soft and hard clauses, while here we propose a system using
only hard clauses and blocking variables. This is easier to integrate
with current solvers and proof checkers.

We discuss the strength of the systems introduced, we show some limitations of them, and we give a short
\CSR proof that any assignment for the weak pigeonhole principle
$\PHP{m}{n}$ falsifies at least $m-n$ clauses.

We conclude by discussing the integration of our rules with the MaxSAT resolution proof system, which is a commonly studied proof system for MaxSAT.
\end{abstract}

\section{Introduction}

This paper investigates new proof systems for MaxSAT that incorporate
redundancy inferences tailored to work for MaxSAT.\@  Redundancy
inferences were introduced as extensions to SAT solvers to
allow non-implicational inferences that preserve satisfiability and
non-satisfiability. For resolution and SAT solvers, the first redundancy
inferences were based on blocked clauses (BC)~\cite{Kullmann1999GeneralizationExtended}
and Resolution Asymmetric Tautology (RAT)~\cite{JHB:inprocessing,HHW:verifying}.
Other work on redundancy reasoning includes
\cite{HHW:trimming,HHW:symmetryDRAT,HKSB:PRuning,KRPH:erDRAT,HeuleBiere:Variable};
and, of particular relevance to the present paper,
are the work of
Heule, Kiesl, and Biere~\cite{HKB.19}, and the work of Buss and Thapen~\cite{BT.21}.
Redundancy inferences formalize
``\emph{without loss of generality}'' reasoning~\cite{RebolaPardoSuda:SatPreserving}
and can substantially strengthen resolution
and, in some cases, the effectiveness of SAT solvers for hard problems
such as the pigeonhole principle (PHP)~\cite{HKSB:PRuning}.
Indeed, in their full generality, redundancy inferences allow
resolution to polynomially simulate extended~resolution.

{\em MaxSAT\/} is a generalization of SAT;
it is the problem of determining a truth assignment
for a CNF formula that  minimizes
the number of falsified clauses.
Although the MaxSAT problem is inherently more difficult than SAT,
in some cases MaxSAT can be adapted to be more efficient
in practice than CDCL solvers for hard problems
such as PHP~\cite{BBIMSM:dualrailMaxSat}.
There are several approaches to MaxSAT solvers, including
MaxSAT resolution \cite{BLM.07,LarrosaHeras:MaxSATandCSPs}, core-guided MaxSAT \cite{fumalik, ansotegui2013sat, morgado2013iterative, oll, narodytska2014maximum}, and
maximum-hitting-set MaxSAT~\cite{BacchusEtAl2021MaximumSatisfiabiliy,davies2011solving, saikko2016lmhs};
the present paper discusses only MaxSAT resolution. The MaxSAT
resolution proof system was first defined by Larrosa and
Heras~\cite{LarrosaHeras:MaxSATandCSPs} and proved completed by Bonet,
Levy and Manya~\cite{BLM.07}.

We define cost preserving
redundancy rules for MaxSAT mirroring the redundancy rules for SAT, called ``\CBC'', ``\CLPR'', ``\CSPR'', ``\CPR'',
and ``\CSR'' (see~\cref{def:CSR}).  The strongest of these is ``\CSR''
based on the substitution redundancy (\SR)~\cite{BT.21}. All five of these new inferences are sound for MaxSAT
reasoning (\cref{thm:soundness}).  Furthermore, we prove that \CSPR, \CPR
and \CSR are complete for MaxSAT (\cref{thm:SPR-completeness}).\@ On the other hand,
we prove that \CLPR and \CBC are incomplete for MaxSAT (\cref{thm:CLPR-incompleteness}).\@
We illustrate the power of \CSR by giving polynomial size
proofs of the cost of the blocking-variable version
of weak pigeonhole principle $\BPHP{m}{n}$ for arbitrary
numbers $m>n$ of pigeons and holes (\cref{thm:PHP-mn}).

Ours is not the first paper bringing redundancy reasoning to the context of optimization.
For instance, the work of Ihalainen, Berg and J\"arvisalo~\cite{IBJ.22},
building on~\cite{BergJarvisalo:SRAT},
introduced versions of redundancy inferences that work
with MaxSAT.\@  
In contrast to the system CPR of~\cite{IBJ.22},
all of our ``cost-'' inferences are polynomial-checkable for validity,
and thus all give traditional Cook-Reckhow proof systems
for MaxSAT.

Another system with redundancy rules for certifying unsatisfiability and optimization
is veriPB~\cite{BogaertsEtAl2023CertifiedDominance}.
Being rooted in cutting planes, veriPB is
particularly apt at certifying optimality, and it can log the
reasoning of MaxSAT solver strategies that are way out of reach of
MaxSAT resolution~\cite{BergEtAl2023CertifiedCore}.
The propositional fragment of \VPB is as strong as Extended Resolution
and DRAT, and plausibly even stronger \cite{LeszekNeil.24}.
In contrast, our systems are plausibly weaker and simpler, but yet
strong enough to prove efficiently interesting formulas. Still our
systems might be amenable to proving lower bounds.
Indeed, our explicit goal is to study the proof complexity of
redundancy rules for MaxSAT in a similar vein of what~\cite{BT.21}
does for SAT, something that is beyond the scope of \VPB, focused on proof logging actual solvers. As a starting point, we show that ``cost-BC'' and ``cost-LPR'' are not complete (\cref{thm:CLPR-incompleteness}) and we show width lower bounds on ``cost-SPR'' and an analogue of a width lower bound for ``cost-SR'' (\cref{thm:small-SR-incompleteness}).
Finally there is another work that tries to modify the redundancy
rules to make them compatible with the MaxSAT model of soft and hard
clauses~\cite{BonacinaEtAl2024MaxsatResolution}. That work has
a similar spirit to this one, but departs in the redundancy test, which checks
for inclusion rather than for reverse unit propagation. Such choice makes
the inference weaker but preserves the number of unsatisfied clauses,
hence the rule applies directly to soft clauses.

Before proceeding with the description and analysis of our systems, we
should highlight two aspects of redundancy inference that are somehow
orthogonal to the choice of the concrete inference rule: clause
deletion and new variable introduction. The applicability of
a redundancy rule depends on the clauses present in the database, and
it seems that allowing deletion of past clauses makes the system
stronger, and indeed collapses together the power of several types of inference rules
(see~\cite{BT.21}).
Likewise the possibility of introducing new variables in redundant
clauses makes all such systems as powerful as extended
resolution~\cite{Kullmann1999GeneralizationExtended}.
With the goal of having systems that are simple and
amenable to proof complexity analysis, in this paper we allow
neither clause deletion nor new variables.

\paragraph*{Structure of the paper}
\Cref{sec:prelim} contains all the necessary preliminaries,
including notation on MaxSAT and the blocking variables encoding of
MaxSAT instances (blocking variables are also used by~\cite{IBJ.22}).
\cref{sec:redundancy-rules} introduces the redundancy
rules for MaxSAT, proves their basic properties, and defines
calculi based on those rules.
\cref{sec:snd-complet} shows
their soundness, and their completeness.
\cref{sec:lower-bounds} shows the incompleteness of \CBC/\CLPR and some limitations of \CSPR and even \CSR.
\cref{sec:examples} gives examples of applications of the
redundancy rules, including a polynomial size
proof of the optimal cost of the weak
Pigeonhole Principle and a general result about the polynomial size
provability  of minimally unsatisfiable formulas.
To deal with the incompleteness of \CBC and \CLPR,
\cref{sec:maxsat_resolution_redundancy} describes proof
systems augmenting MaxSAT resolution and the systems defined in
\cref{sec:redundancy-rules}.
\cref{sec:conclusions} gives some concluding remarks.

\section{Preliminaries}
\label{sec:prelim}

For a natural number $n$, let $[n]$ be the set $\{1,\dots,n\}$.
Sets and multi-sets are denoted with capital Roman or Greek letters.

\smalltitle{Propositional logic notation} A \emph{Boolean variable}
\(x\) takes values in \(\{0,1\}\). A \emph{literal} is either a variable
\(x\) or its negation \(\lnot x\). A \emph{clause} is a finite
disjunction of literals, \ie $C=\biglor_{i}\ell_{i}$.
The
empty clause is \(\bot\).
A formula in \emph{Conjunctive Normal Form} (CNF) is a conjunction of clauses
$\Gamma=\bigwedge_{j}C_{j}$. We identify a CNF with the multiset of
its clauses, and denote as $|\Gamma|$ the  number of its clauses (counted with multiplicity).
We denote as \(\Var(\Gamma)\) the set of variables in \(\Gamma\).

\smalltitle{Substitutions and assignments} A \emph{substitution} $\sigma$ for
a set of variables \(X\) is a function so that \(\sigma(x)\) is either
\(0\), \(1\) or some literal defined on \(X\). For convenience, we
extend a substitution \(\sigma\) to constants and literals, setting
\(\sigma(0)=0\), \(\sigma(1)=1\), and
\(\sigma(\lnot x)=\lnot {\sigma(x)}\) for any variable \(x \in X\).
The composition of two substitutions $\sigma,\tau$ is the substitution
$\sigma\circ\tau$, where $\sigma\circ\tau(x)=\sigma(\tau(x))$ for
$x \in X$.
A substitution $\sigma$ is an \emph{assignment} when
\(\sigma(x) \in \{0,1,x\}\) for any \(x \in X\).
The \emph{domain} of an assignment \(\sigma\) is $\dom(\sigma)=\sigma^{-1}(\Bool)$, and
\(\sigma\) is a \emph{total assignment} over \(X\) if \(X\) is its
domain, \ie $\sigma$ maps all variables in \(X\) to Boolean values.
Given a clause $C=\biglor_{i}\ell_{i}$ and a substitution~$\sigma$,
the clause~$C$ \emph{restricted} by $\sigma$, is \(
 C{\upharpoonright}_\sigma
=\biglor_{i}\sigma(\ell_{i})\ ,
\) simplified using the usual logic rules, \ie  $D\lor \lfalse = D$, $D\lor \ltrue = \ltrue$, and $D\lor \ell\lor \ell = D\lor \ell$. Another notation for \( C{\upharpoonright}_\sigma\) is \(\sigma(C)\).
If $\sigma(C)=\ltrue$ or $\sigma(C)$ is tautological we say that $\sigma\models C$, \ie \(\sigma\) \emph{satisfies} \(C\).

The \emph{restriction} of a CNF formula $\Gamma$ by \(\sigma\), denoted
as \(\Gamma{\upharpoonright}_\sigma\), is the conjunction of all clauses~\(C{\upharpoonright}_\sigma\) where $C\in \Gamma$
and \(\sigma(C)\ne 1\). The CNF $\Gamma{\upharpoonright}_\sigma$ is also a multiset.
We say that $\sigma$ \emph{satisfies} $\Gamma$ ($\sigma\models \Gamma$) if for every $C\in \Gamma$, $\sigma\models C$, \ie
$\Gamma{\upharpoonright}_\sigma=\emptyset$.
We say that $\Gamma\models C$ if for every substitution~$\sigma$, if
$\sigma\models \Gamma$ then $\sigma\models C$.

We identify a literal \(\ell\) with the substitution that assigns
\(\ell\) to \(\ltrue\) and leaves all other variables unassigned.
Hence we use notations like \(\Gamma{\upharpoonright}_{\ell}\). Likewise, given
a clauses \(C\) we denote as \(\lnot{C}\) the assignment that maps all
literals in \(C\) to false, and we use the notation~\(\Gamma{\upharpoonright}_{\lnot C}\).

\smalltitle{Unit propagation}
A \emph{unit clause} is a clause of just one literal.
Unit propagation works as follows. Start with a CNF \(\Gamma\): if
\(\Gamma\) has no unit clauses, the process ends, otherwise pick some unit clause
\(\ell\) in \(\Gamma\) arbitrarily, remove from \(\Gamma\) all clauses containing \(\ell\) and remove literal \(\bar \ell\) from all clauses containing it. Keep repeating until \(\Gamma\) has no more unit clauses.
Regardless of the choice of the unit clause, the process always ends
with the same formula.

We say that \(\Gamma \vdash_{1} C\) when the application of unit
propagation to the formula \(\Gamma{\upharpoonright}_{\lnot{C}}\) produces the empty clause.
For two CNF formulas $\Gamma, \Delta$ we say that
$\Gamma\vdash_1 \Delta$ if for every
$D\in \Delta$, \(\Gamma \vdash_{1} D\).
Clearly, if $\Gamma\supseteq \Delta$ then $\Gamma\vdash_1\Delta$, and
if $\Gamma\vdash_1\Delta$, then $\Gamma\models\Delta$. It is important
to stress that the \(\vdash_1\) relation is \emph{efficiently checkable}.

\begin{obs}[{\cite[Fact 1.3]{BT.21}}]
\label{obs:unit-prop-restriction}
	Let $\sigma$ be a substitution and $\Gamma,\Delta$ be CNF formulas, if $\Gamma\vdash_1 \Delta$, then $\Gamma{\upharpoonright}_\sigma \vdash_1 \Delta{\upharpoonright}_\sigma$. 
\end{obs}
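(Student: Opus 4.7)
The plan is to unfold both $\vdash_1$ relations and reduce the claim to a single monotonicity property of unit propagation. Fix an arbitrary $D'\in\Delta{\upharpoonright}_\sigma$; by definition of restriction, $D'=D{\upharpoonright}_\sigma$ for some $D\in\Delta$ with $\sigma(D)\neq 1$. The hypothesis $\Gamma\vdash_1 D$ means that unit propagation on $\Gamma{\upharpoonright}_{\lnot D}$ derives $\bot$, while the goal $\Gamma{\upharpoonright}_\sigma\vdash_1 D'$ means that unit propagation on $\Gamma{\upharpoonright}_\sigma{\upharpoonright}_{\lnot D'}$ derives $\bot$. I would isolate the core of the argument as the following key lemma: \emph{for every CNF $\Phi$ and every substitution $\tau$, if unit propagation on $\Phi$ derives $\bot$, then so does unit propagation on $\Phi{\upharpoonright}_\tau$.}

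I would prove the key lemma by induction on the length of the unit-propagation sequence that derives $\bot$ from $\Phi$. In the base case $\bot\in\Phi$, and hence $\bot\in\Phi{\upharpoonright}_\tau$. Otherwise, pick a unit clause $\ell$ of $\Phi$ such that $\Phi{\upharpoonright}_\ell$ derives $\bot$ in strictly fewer steps, and split on $\tau(\ell)$. If $\tau(\ell)=0$, then $\bot$ is already present in $\Phi{\upharpoonright}_\tau$ from the clause $\ell$. If $\tau(\ell)=1$, then $\tau$ already implements the restriction by $\ell$, so $\Phi{\upharpoonright}_\tau=(\Phi{\upharpoonright}_\ell){\upharpoonright}_\tau$, and the induction hypothesis applied to $\Phi{\upharpoonright}_\ell$ yields the result. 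If $\tau(\ell)$ is a literal $\ell^*$, then $\ell^*$ appears as a unit clause in $\Phi{\upharpoonright}_\tau$; a clause-by-clause check (splitting on whether each clause of $\Phi$ contains $\ell$, $\lnot\ell$, or neither) yields $(\Phi{\upharpoonright}_\tau){\upharpoonright}_{\ell^*}=(\Phi{\upharpoonright}_\ell){\upharpoonright}_\tau{\upharpoonright}_{\ell^*}$, to which the induction hypothesis (with $\tau$ extended by $\ell^*\mapsto 1$) applies.

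Finally, I would combine everything by rewriting $\Gamma{\upharpoonright}_\sigma{\upharpoonright}_{\lnot D'}$ as $(\Gamma{\upharpoonright}_{\lnot D}){\upharpoonright}_\tau$ for a concrete $\tau$. Using substitution composition, $\Gamma{\upharpoonright}_\sigma{\upharpoonright}_{\lnot D'}=\Gamma{\upharpoonright}_{\lnot D'\circ\sigma}$, and the hypothesis $\sigma(D)\neq 1$ ensures that $\lnot D'\circ\sigma$ agrees with $\lnot D$ on $\Var(D)$: any literal of $D$ not mapped to $0$ by $\sigma$ must be mapped to a literal appearing in $D'$, which is then falsified by $\lnot D'$. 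Hence $\lnot D'\circ\sigma=\tau\circ\lnot D$ for a suitable $\tau$, and applying the key lemma to $\Phi=\Gamma{\upharpoonright}_{\lnot D}$ with this $\tau$ finishes the argument. The main obstacle is the $\tau(\ell)=\ell^*$ case of the induction, where substitution and propagation do not commute in the obvious way and require the clause-by-clause verification sketched above.
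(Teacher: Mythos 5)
Your proposal is correct and follows essentially the same route as the paper: both arguments reduce the claim to the single fact that applying a substitution to a CNF preserves derivability of $\bot$ by unit propagation. The only differences are ones of detail — the paper dispatches that fact in one line (``since $\sigma$ maps unit literals into unit literals or constants''), whereas you prove it by an explicit induction on the propagation sequence, and you organize the final bookkeeping via composed substitutions ($\lnot{D'}\circ\sigma=\tau\circ\lnot{D}$) rather than by rewriting $\Gamma{\upharpoonright}_{\lnot D}$ as $\Gamma\land\bigwedge_{i}\lnot{\ell_i}$.
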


\begin{proof}
	By assumption, for every clause $D\in \Delta$, $\Gamma{\upharpoonright}_{\lnot D}\ \vdash_1 \bot$. Let $D=\bigvee_{i\in I}\ell_i$. Saying that $\Gamma{\upharpoonright}_{\lnot D}\ \vdash_1 \bot$ is equivalent to saying $\Gamma\land \bigwedge_{i\in I}\lnot \ell_i\vdash_1 \bot$. This in turn implies \begin{equation}
		\label{eq:unit}
	(\Gamma\land \bigwedge_{i\in I}\lnot \ell_i){\upharpoonright}_\sigma\vdash_1 \bot ,
		\end{equation}
		 since $\sigma$ maps unit literals into unit literals (or $\ltrue$, $\lfalse$).
	By basic properties of substitutions we have  $(\Gamma\land \bigwedge_{i\in I}\lnot \ell_i){\upharpoonright}_\sigma=\Gamma{\upharpoonright}_\sigma\land \bigwedge_{i\in I}(\lnot \ell_i){\upharpoonright}_\sigma$ and $(\lnot \ell_i){\upharpoonright}_\sigma=\overline{(\ell_i{\upharpoonright}_\sigma)}$, that is \eqref{eq:unit} implies
	\[
	\Gamma{\upharpoonright}_{\sigma}{\upharpoonright}_{\overline{D|\sigma}}\ \vdash_1\bot \ 
	\]
	and $\Gamma{\upharpoonright}_\sigma\vdash_1 D{\upharpoonright}_\sigma$.
\end{proof}

\smalltitle{Resolution} Resolution is a well-studied propositional deduction system with
two inference rules: (i) from a clause~\(A\) we can deduce any \(B\) s.t.\ \(A \subseteq B\);
(ii) from clauses \(A \lor x\) and \(B \lor \lnot{x}\) we can deduce
\(A \lor B\). A \emph{resolution proof} from a set of clauses \(\Gamma\) is
a sequence of clauses \(D_{1}, D_{2}, \ldots, D_{t}\) where each
\(D_{i}\) is either already in \(\Gamma\) or is deduced from earlier
clauses in the sequence using one of the two inference rules.
Resolution is complete, thus deciding whether a clause \(C\) can be
deduced from \(\Gamma\) is the same as deciding whether
\(\Gamma \models C\).

\smalltitle{MaxSAT} Given a CNF formula $F$, MaxSAT asks to
 find the maximum number of clauses in $F$ which can be
simultaneously satisfied.
In applications, it is useful to consider a generalization for which we divide the
clauses into \emph{hard} or \emph{soft} (\emph{partial} MaxSAT). Hard clauses must be
satisfied, while soft clauses can be falsified with a cost.
Consider \(F= H \land S\) where \(H\) is the multiset of hard clauses
and \(S\) is the multiset of soft ones. In this model, MaxSAT asks to
find the maximum number of clauses in \(S\) that can be simultaneously
satisfied by an assignment that satisfies all clauses in \(H\).
Observe that the optimization problem is not well defined if \(H\) is
not satisfiable.\footnote{An even more general version is \emph{weighted} MaxSAT,
  where we would consider \emph{weighted} set of clauses \((F,w)\),
  \ie each clauses \(C \in F\) has an associated weight \(w(C)\) where
  $w: F \to \mathbb N\cup\{\infty\}$. In this model the goal is to
  minimize the weight of the falsified clauses. The role of the weight
  $\infty$ is to model \emph{hard} clauses. In this paper we do not
  focus on this model.}
It is not relevant whether \(H\) is a set or a multiset. In \(S\), on the
other hand, the multiplicity of  soft clauses must be accounted for.

Proof systems for MaxSAT aim to show lower bounds on the cost of (partial)
MaxSAT instances, one such system is MaxSAT resolution. 

\smalltitle{MaxSAT with blocking variables} Without loss of generality we can
assume that all soft clauses in a MaxSAT instance are unit clauses; indeed, using a new variable $b$, a soft clause $C$ can
be replaced with a hard clause
$C\lor b$ and a soft clause $\lnot b$, without affecting the cost.
The variable $b$ is usually called a \emph{blocking variable}. This appears in \cite{GW94}, but it might have been used even earlier.

\begin{defi}
	  Let \(F = H \land S\) with soft clauses \(S=C_{1} \land \dots \land C_{m}\).
  The \emph{blocking variables formulation} of $F$ is  \(F'= H' \land S'\) where
  \begin{itemize}
    \item
    \(H' = H \land (C_{1} \lor b_{1}) \land \dots \land (C_{m} \lor
    b_{m})\),
    \item\(S' = \lnot{b_{1}} \land \dots \land \lnot{b_{m}}\),
  \end{itemize}
  and $b_1,\dots,b_m$ are new  variables \emph{(blocking variables)} not appearing in $F$.
  We say that $\Gamma$ is a MaxSAT instance \emph{encoded with blocking variables}, when it is given as a set of hard clauses of the form as in $H'$ above. The soft clauses, then, are implicit.
\end{defi}

\begin{obs}\label{obs:softliterals}
Let $F=H\land S$ be a MaxSAT instance and $F'=H'\land S'$ be the blocking variables formulation of $F$.
  Any assignment that satisfies \(H\) and falsifies \(k\) clauses in
  \(S\) can be extended to an assignment that satisfies \(H'\) and
  sets \(k\) blocking variables to true. Vice versa, any
  assignment that satisfies \(H'\) and sets \(k\) blocking variables
  to true satisfies \(H\) too and falsifies at most \(k\) clauses in
  \(S\).
\end{obs}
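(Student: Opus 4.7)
The plan is to prove both directions by explicit construction, showing how the blocking variables faithfully record which soft clauses are falsified. There is no genuine obstacle here: the observation is essentially bookkeeping that justifies the blocking-variable encoding as a cost-preserving reformulation. The only care needed is to track how the added hard clauses $C_i \lor b_i$ convert ``falsification of $C_i$'' into the assignment $b_i = \ltrue$ and back.

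For the forward direction, given an assignment $\alpha$ that satisfies $H$ and falsifies exactly $k$ of the soft clauses $C_1, \dots, C_m$, I would extend $\alpha$ to an assignment $\alpha'$ over $\Var(F) \cup \{b_1, \dots, b_m\}$ by setting $\alpha'(b_i) = \ltrue$ whenever $\alpha$ falsifies $C_i$, and $\alpha'(b_i) = \lfalse$ otherwise. The clauses of $H$ remain satisfied since $\alpha'$ agrees with $\alpha$ on $\Var(F)$ and the $b_i$ do not occur in $H$. For each additional hard clause $C_i \lor b_i$ in $H'$: if $\alpha \models C_i$ then $\alpha' \models C_i \lor b_i$ immediately; otherwise $\alpha'(b_i) = \ltrue$ by construction, so again $\alpha' \models C_i \lor b_i$. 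Finally, exactly $k$ blocking variables are set to $\ltrue$, as required.

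For the reverse direction, suppose $\alpha'$ satisfies $H'$ and assigns $\ltrue$ to exactly $k$ blocking variables. Since $H \subseteq H'$, the assignment $\alpha'$ already satisfies $H$. For any soft clause $C_i$ that $\alpha'$ falsifies, the hard clause $C_i \lor b_i \in H'$ is still satisfied, forcing $\alpha'(b_i) = \ltrue$. Hence the set of indices of falsified soft clauses is contained in the set of indices whose blocking variable is true under $\alpha'$, yielding at most $k$ falsified clauses from $S$. The inequality may be strict because $\alpha'$ is free to set some $b_i$ to $\ltrue$ even when $C_i$ is already satisfied, which is precisely why the observation states ``at most $k$'' rather than exactly $k$.
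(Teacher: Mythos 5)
Your proof is correct; the paper states this observation without proof, and your argument is exactly the routine bookkeeping the authors leave implicit (set $b_i$ true precisely when $C_i$ is falsified in one direction, and note that a falsified $C_i$ forces $b_i$ true in the other). Nothing to add.
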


Because of \cref{obs:softliterals}, for the rest of this
work we consider $\Gamma$ to be a MaxSAT instance encoded with blocking variables usually named $\{b_1,\dots,b_m\}$.
The goal is to satisfy \(\Gamma\) while
setting to true the least number of blocking variables. More formally,
given a total assignment \(\alpha\) for \(\Gamma\), we define
\[
  \cost{\alpha} = \sum^m_{i=1} \alpha(b_i)   \quad \hbox{and} \quad
  \cost{\Gamma} = \min_{\alpha\, :\, \alpha \,\models\, \Gamma} \cost{\alpha}
\]
and the goal is to find the value of $\cost{\Gamma}$.
Notice that, the notation $\cost{\alpha}$ is defined even for assignments not satisfying $\Gamma$.

\section{Redundancy rules for MaxSAT}
\label{sec:redundancy-rules}

In the context of SAT, a clause $C$ is redundant \wrt a CNF instance $\Gamma$ if $\Gamma$ and $\Gamma\cup\{C\}$ are equisatisfiable, that is either they both are satisfiable or both unsatisfiable \cite{Kullmann1999GeneralizationExtended}.
The natural adaptation of this notion to MaxSAT is a clause $C$ that does not affect the cost of $\Gamma$.

\begin{defi}[redundant clause, \cite{IBJ.22}]
  A clause \(C\) is \emph{redundant} \wrt a MaxSAT instance \(\Gamma\) when
  \begin{equation}
  \label{eq:cost}
    \cost{\Gamma} = \cost{\Gamma \cup \{C\}}\;.
    \end{equation}
\end{defi}
Clauses that logically follow from \(\Gamma\) are obviously
redundant, but there may be other useful clauses that do not
follow logically, and yet do not increase the cost if added.

The condition in eq. \eqref{eq:cost} is not polynomially checkable (unless, say
$\hbox{P}=\hbox{NP}$). Therefore, we consider efficiently certifiable notions of redundancy, \ie ways to add redundant clauses (in the sense of eq.~\eqref{eq:cost}) while certifying efficiently their redundancy.
This is done showing how to extend in a systematic way the notions of
efficiently certifiable redundancy already studied in the context of
SAT (BC, RAT, LPR, SPR, PR, SR) \cite{HKB.19, BT.21} to the context of
MaxSAT\@. This is an alternative to the approach of~\cite{IBJ.22}. This definition could also be seen as a very special case of veriPB \cite{BogaertsEtAl2023CertifiedDominance} (see \cref{sec:veriPB} for more details on the connections with veriPB).

\begin{defi}
\label{def:CSR}
A clause $C$ is \emph{cost substitution redundant} (\CSR) \wrt
to  \(\Gamma\) if there exists a substitution \(\sigma\) such that
\begin{enumerate}
	\item $\Gamma {\upharpoonright}_{\assign{C}}\ \vdash_1 (\Gamma \cup \{C\})| _\sigma$ \hfill \emph{(redundancy)}\label{item:CSRdef1}
	\item for all total assignments $\tau\supseteq \assign{C}$,
    $\cost{\tau \circ \sigma} \leq \cost{\tau}$ \hfill\emph{(cost)}.\label{item:CSRdef2}
\end{enumerate}

If the substitution $\sigma$ has some additional structure, we have the following redundancy rules listed in decreasing order of generality:
{
  \begin{description}
    \item[Cost propagation redundant (\emph{\CPR})] if \(\sigma\) is a partial assigment.
    \item[Cost subset propagation redundant (\emph{\CSPR})] if \(\sigma\)
    is a partial assigment with the same domain as \(\lnot{C}\). In other words,
    \(\sigma\) flips some variables in \(\lnot{C}\).
    \item[Cost literal propagation redundant (\emph{\CLPR})] if \(\sigma\) is
    a partial assignment with the same domain as \(\lnot{C}\), but
    differs from~\(\lnot C\) on exactly one variable.
  \end{description}
  }
\end{defi}

\Cref{item:CSRdef1} in \cref{def:CSR} claims that adding
\(C\) does not make \(\Gamma\) unsatisfiable, unless it was already
the case.
Together with \cref{item:CSRdef2}, it ensures that any assignment
that falsifies the new clause \(C\) can be patched with a substitution
\(\sigma\) so that \(C\) is satisfied without increasing the
minimum cost (see~\cref{lem:CSR-redundant}).

Indeed, \cref{item:CSRdef1} in \cref{def:CSR} and its variants correspond to the redundancy rules of proof systems \SR, \PR, \SPR, and \LPR from \cite{BT.21}, adapted here to consider cost.
Since \LPR is the same as \RAT (see~\cite[Theorem 1.10]{BT.21}), the
notion of cost literal propagation redundancy could as well be called
\emph{cost-\RAT redundancy}.

\begin{rem}\label{rmk:IBJ}
  It is important to compare \cref{def:CSR} with \cite[Definition 2]{IBJ.22}. Redundancy conditions are very similar and the main differences are in the cost conditions.
  Let us compare their \texttt{CPR} rule with our \CPR\@.
  In \CPR, the cost condition requires the witness $\sigma$ to be at least as good as all possible extensions of $\lnot{C}$, while in \texttt{CPR} the requirement is enforced only on those extensions of $\lnot{C}$ that satisfy $\Gamma$.
  The latter more permissive condition allows to derive more clauses,
  but it is unlikely to be polynomially checkable, while the condition in
  \CPR is polynomially checkable (see~\cref{lem:CSR-p-time}).
  In \cite{IBJ.22},
  the authors also define two  polynomially checkable rules where the cost condition is not present, but implicitly enforced via restrictions on the type of assignments used. Those rules are special cases of \CLPR and \CSPR respectively.
\end{rem}

The very first notion of adding systematically redundant clauses is the one of \emph{blocked clause} \cite{Kullmann1999GeneralizationExtended}. Since this notion does not match exactly the scheme of \cref{def:CSR} \Cref{item:CSRdef1}, we define  \CBC directly as follows.

\begin{defi}[\CBC]
\label{defi:BC}A clause $C\lor \ell$ is a \CBC w.r.t. a set of clauses $\Gamma$ and a literal $\ell$ if for every clause $D\lor \lnot \ell\in \Gamma$, $C\lor D$ is a tautology,
 and $\ell$ is not a blocking variable with positive polarity.
\end{defi}

\begin{prop}
	If $C\lor \ell$ is a \CBC with respect to $\Gamma$ and the literal $\ell$, then $\cost{\Gamma} = \cost{\Gamma \cup \{C\lor \ell\}}$.
 \end{prop}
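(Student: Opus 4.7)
The plan is to prove the two inequalities $\cost{\Gamma}\le\cost{\Gamma\cup\{C\lor\ell\}}$ and $\cost{\Gamma\cup\{C\lor\ell\}}\le\cost{\Gamma}$ separately. The first is immediate: any total assignment satisfying the larger instance also satisfies $\Gamma$ with the same cost, so adding a clause can only (weakly) increase the optimum. The real work is in the reverse inequality, which I would prove by taking an optimal $\alpha\models\Gamma$ and transforming it into an assignment $\alpha'\models\Gamma\cup\{C\lor\ell\}$ with $\cost{\alpha'}\le\cost{\alpha}$.

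More concretely, I would assume without loss of generality that $C$ is non-tautological and contains neither $\ell$ nor $\lnot\ell$ (otherwise $C\lor\ell$ is either a tautology or equal to $C$, and the claim is trivial). If $\alpha\models C\lor\ell$ already, take $\alpha'=\alpha$. Otherwise $\alpha$ falsifies $C$ and $\alpha(\ell)=\lfalse$; let $\alpha'$ agree with $\alpha$ on every variable except the one underlying $\ell$, which is flipped so that $\alpha'(\ell)=\ltrue$. By construction $\alpha'\models C\lor\ell$, so it remains to check both $\alpha'\models\Gamma$ and that the cost does not increase.

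For $\alpha'\models\Gamma$, I would do a case analysis on a generic clause $E\in\Gamma$. If $\ell\in E$, then $\alpha'(\ell)=\ltrue$ satisfies $E$. If $E=D\lor\lnot\ell$, then by the \CBC condition $C\lor D$ is a tautology, so there exists a literal $m\in D$ with $\lnot m\in C$. Because $\alpha$ falsifies $C$, we have $\alpha(m)=\ltrue$; and since $\ell,\lnot\ell\notin C$, the literal $m$ does not mention the variable of $\ell$, so $\alpha'(m)=\alpha(m)=\ltrue$, giving $\alpha'\models E$. If $E$ mentions neither $\ell$ nor $\lnot\ell$, then $\alpha$ and $\alpha'$ agree on $E$ and the conclusion is trivial.

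Finally, I would verify $\cost{\alpha'}\le\cost{\alpha}$ by invoking the second clause of \cref{defi:BC}. Since $\alpha$ and $\alpha'$ differ only on the variable $x$ underlying $\ell$, if $x$ is not a blocking variable the two costs coincide. Otherwise $x=b_i$ for some $i$, and the hypothesis that $\ell$ is \emph{not} a blocking variable with positive polarity forces $\ell=\lnot b_i$; then $\alpha(\ell)=\lfalse$ gives $\alpha(b_i)=\ltrue$, and after flipping $\alpha'(b_i)=\lfalse$, so $\cost{\alpha'}=\cost{\alpha}-1$. This step is the only one that genuinely uses the ``positive polarity'' side condition, and it is the place where one must be most careful: without it, flipping $\ell$ could turn a previously-false blocking variable true and increase the cost, breaking the inequality.
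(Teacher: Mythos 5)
Your proposal is correct and follows essentially the same route as the paper's proof: take an optimal assignment for $\Gamma$, flip the literal $\ell$ if $C\lor\ell$ is falsified, use the tautology condition on clauses $D\lor\lnot\ell$ to preserve satisfaction of $\Gamma$, and use the polarity side condition to see the cost does not increase. You merely spell out the details (the easy inequality, the extraction of a literal $m\in D$ with $\lnot m\in C$, and the explicit cost case analysis) that the paper leaves implicit.
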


\begin{proof}
It is enough to show that \(\cost{\Gamma}\geq \cost{\Gamma\cup \{C\lor \ell\}}\).
  Let \(\cost{\Gamma}=k\).
 Consider an
  optimal total assignment $\alpha$ that satisfies~\(\Gamma\) and sets to
  true exactly \(k\) blocking variables.
  If $\alpha\models C\lor \ell$ we already have that $\alpha\models \Gamma\cup \{C\lor \ell\}$ and $\cost{\alpha}=k$.
  Otherwise, $\alpha$ extends $\assign{C\lor \ell}$. Consider the assignment $\beta$ that is equal to $\alpha$ except in $\ell$ where the value is flipped.
  
  By construction $\beta\models C\lor \ell$, moreover $\beta\models \Gamma$ since every clause which was satisfied by $\alpha$ and now might not be, has the form $D\lor \lnot \ell$, and hence by the blocked clause definition is satisfied by $\overline{C}$ and therefore $\beta$.
  Moreover, $\cost{\beta}\leq k$ since the flipped variable does not increase the cost.
\end{proof}

\begin{obs}
Any \CBC clause $C\lor \ell$ w.r.t. to $\Gamma$ and $\ell$ is also a \CLPR clause w.r.t. $\Gamma$.
\end{obs}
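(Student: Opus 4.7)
The plan is to exhibit an explicit witness substitution $\sigma$ for the \CLPR rule and then verify the two conditions in \cref{def:CSR}. The natural choice, forced by the \CLPR format, is the partial assignment $\sigma$ with $\dom(\sigma)=\dom(\lnot{(C\lor\ell)})$ that agrees with $\lnot{(C\lor\ell)}$ on the variables of $C$ and differs from $\lnot{(C\lor\ell)}$ on exactly the variable underlying $\ell$, by setting $\ell$ to $\ltrue$. By construction $\sigma$ satisfies the structural requirement of \CLPR, so only the cost condition and the redundancy condition remain to verify.

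For the cost condition, take any total assignment $\tau\supseteq \lnot{(C\lor\ell)}$. Since $\sigma$ is a partial assignment with domain contained in $\dom(\tau)$, the composition $\tau\circ\sigma$ coincides with $\tau$ on every variable outside $\dom(\sigma)$ and coincides with $\sigma$ on $\dom(\sigma)$. But $\tau$ already agrees with $\lnot{(C\lor\ell)}$ on that domain, so $\tau\circ\sigma$ differs from $\tau$ only on the variable of $\ell$, where the value is flipped from $\lfalse$ to $\ltrue$ (in the sense that $\ell$ itself goes from $\lfalse$ to $\ltrue$). The \CBC definition excludes the case in which $\ell$ is a blocking variable with positive polarity, so either the variable of $\ell$ is not a blocking variable (cost is preserved) or $\ell=\lnot b_i$ for some blocking variable $b_i$ (in which case $b_i$ flips from $\ltrue$ to $\lfalse$ and cost strictly decreases by one). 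In either allowed situation $\cost{\tau\circ\sigma}\leq\cost{\tau}$.

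For the redundancy condition, I would show the stronger statement $(\Gamma\cup\{C\lor\ell\}){\upharpoonright}_\sigma\subseteq \Gamma{\upharpoonright}_{\lnot{(C\lor\ell)}}$ as multisets, from which the $\vdash_1$ claim is immediate. The new clause $C\lor\ell$ is killed by $\sigma$ because $\sigma(\ell)=\ltrue$. For each $E\in\Gamma$ I would split into three cases according to how $E$ mentions the variable of $\ell$. If $E$ contains $\ell$, then $E{\upharpoonright}_\sigma=\ltrue$ and disappears. If $E$ contains neither $\ell$ nor $\lnot\ell$, then $\sigma$ and $\lnot{(C\lor\ell)}$ act identically on $E$, so $E{\upharpoonright}_\sigma=E{\upharpoonright}_{\lnot{(C\lor\ell)}}$. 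The main step, and the only place where the \CBC hypothesis is used nontrivially, is the third case: $E=D\lor\lnot\ell$. Here $E{\upharpoonright}_{\lnot{(C\lor\ell)}}=\ltrue$ because $\lnot\ell$ is satisfied, so the clause is absent from $\Gamma{\upharpoonright}_{\lnot{(C\lor\ell)}}$, and I must argue $E{\upharpoonright}_\sigma=\ltrue$ as well. This follows from the tautology hypothesis: $C\lor D$ contains some complementary pair $x,\lnot x$; the pair cannot both lie in $C$, and if it lies entirely in $D$ then $D$ is itself tautological and so is $D{\upharpoonright}_\sigma$; otherwise one of the two literals is a literal of $C$, so $\sigma$ falsifies it and necessarily satisfies its complement in $D$, making $D{\upharpoonright}_\sigma=\ltrue$.

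The only subtle step is this last case analysis, so that is where the proof concentrates its work; the rest is a straightforward unwinding of the definitions of $\sigma$ and of restriction.
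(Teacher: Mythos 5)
Your proposal is correct and follows essentially the same route as the paper: the same witness $\sigma=\assign{C}\cup\{\ell\}$, the same observation that flipping $\ell$ to true cannot raise the cost because $\ell$ is not a positive blocking literal, and the same containment argument for the redundancy condition, with the tautology of $C\lor D$ forcing $\lnot{C}$ (and hence $\sigma$) to satisfy the clauses $D\lor\lnot\ell$. You merely spell out the case analysis in more detail than the paper does.
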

\begin{proof}
  Let $\sigma$ be defined as $\assign{C}\cup \{\ell\}$. By definition,
  setting $\ell=1$ does not increase the cost, so the cost-condition
  is satisfied, and furthermore it satisfies $C\lor \ell$.
  We need to show that
  $\Gamma {\upharpoonright}_{\assign{C\lor \ell}}\ \vdash_1 \Gamma |
  _\sigma$. All clauses not containing the variable of \(\ell\) are in
  the sets on both sides. The one with literal \(\ell\) are satisfied
  on the right side. The only clauses we need to check are the ones of
  the form \(D \lor \lnot{\ell}\). In that case \(D \lor C\) is
  a tautology, and the assignment \(\lnot{C}\), satisfies \(D\).
  This means that these clause are all satisfied and do not occur in 
  \(\Gamma {\upharpoonright}_\sigma\).
\end{proof}

\begin{lem}\label{lem:CSR-redundant}
  If $C$ is \CSR
  \wrt to \(\Gamma\), then $C$ is redundant \wrt $\Gamma$.
\end{lem}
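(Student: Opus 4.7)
The plan is to show the two inequalities $\cost{\Gamma} \leq \cost{\Gamma \cup \{C\}}$ and $\cost{\Gamma} \geq \cost{\Gamma \cup \{C\}}$ separately. The first is immediate: any total assignment satisfying $\Gamma \cup \{C\}$ also satisfies $\Gamma$, and hence $\cost{\Gamma}$ is at most the minimum cost over the (smaller) set of satisfying assignments of $\Gamma \cup \{C\}$.

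For the nontrivial direction, I would start from an optimal total assignment $\alpha \models \Gamma$ with $\cost{\alpha} = \cost{\Gamma}$. If $\alpha \models C$ then $\alpha$ already witnesses $\cost{\Gamma \cup \{C\}} \leq \cost{\Gamma}$. Otherwise $\alpha$ falsifies $C$, so $\alpha$ extends $\assign{C}$. Since $\alpha \models \Gamma$ and $\alpha$ is consistent with $\assign{C}$, it satisfies $\Gamma{\upharpoonright}_{\assign{C}}$. Applying condition \ref{item:CSRdef1} of \cref{def:CSR} and the fact that $\vdash_1$ implies $\models$, we obtain $\alpha \models (\Gamma \cup \{C\}){\upharpoonright}_\sigma$.

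The key observation is then that, for any clause $D$ and any substitution $\sigma$, $\alpha \models D{\upharpoonright}_\sigma$ if and only if $\alpha \circ \sigma \models D$; this follows directly from the definition of restriction as $D{\upharpoonright}_\sigma = \bigvee_i \sigma(\ell_i)$ and of composition. Consequently $\alpha \circ \sigma \models \Gamma \cup \{C\}$, and $\alpha \circ \sigma$ is a total assignment since $\alpha$ is total and $\sigma$ maps each variable to a literal or a Boolean. Finally, since $\alpha$ is a total extension of $\assign{C}$, condition \ref{item:CSRdef2} of \cref{def:CSR} gives $\cost{\alpha \circ \sigma} \leq \cost{\alpha} = \cost{\Gamma}$, so $\alpha \circ \sigma$ witnesses $\cost{\Gamma \cup \{C\}} \leq \cost{\Gamma}$.

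The main (minor) obstacle is being careful with the identification between satisfaction of a $\sigma$-restricted formula by $\alpha$ and satisfaction of the original formula by $\alpha \circ \sigma$, together with checking that $\alpha \circ \sigma$ remains a total assignment so that condition \ref{item:CSRdef2} can be applied to it. Everything else is a direct assembly of the two defining conditions of \CSR.
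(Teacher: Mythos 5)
Your proposal is correct and follows essentially the same route as the paper's proof: take an optimal total assignment $\alpha$ for $\Gamma$, handle the case $\alpha\models C$ trivially, and otherwise use the redundancy condition to get $\alpha\circ\sigma\models\Gamma\cup\{C\}$ and the cost condition to bound $\cost{\alpha\circ\sigma}$. The only difference is that you spell out a few routine details (the equivalence $\alpha\models D{\upharpoonright}_\sigma \iff \alpha\circ\sigma\models D$ and the totality of $\alpha\circ\sigma$) that the paper leaves implicit.
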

\begin{proof}
 It is enough to show that \(\cost{\Gamma}\geq \cost{\Gamma\cup \{C\}}\).
  Let \(\cost{\Gamma}=k\).
  To show that adding \(C\) to $\Gamma$ does not increase the cost, consider an
  optimal total assignment $\alpha$ that satisfies~\(\Gamma\) and sets to
  true exactly \(k\) blocking variables.
  If $\alpha\models C$ we already have that $\alpha\models \Gamma\cup \{C\}$ and $\cost{\alpha}=k$.
  Otherwise, $\alpha$ extends $\assign{C}$ and, by assumption, there is
  a substitution $\sigma$ such that \(\cost{\alpha\circ \sigma}\leq k\). To show that $\cost{\Gamma\cup\{C\}}\leq k$, it remains to show that indeed $\alpha\circ \sigma\models \Gamma\cup\{C\}$. By assumption,
  \[
    \Gamma{\upharpoonright}_{\assign{C}}\ \vdash_1 (\Gamma \cup \{C\})| _\sigma \;,
  \]
and, since $\alpha\models \Gamma$ and extends $\assign{C}$,
  then $\alpha\models (\Gamma\cup \{C\})| _\sigma$  too.
  Equivalently, $\alpha \circ \sigma\models \Gamma \cup \{C\}$.
\end{proof}

Both \cref{item:CSRdef1} and \cref{item:CSRdef2} of \cref{def:CSR} are stronger than what is actually needed for \cref{lem:CSR-redundant} to hold. Indeed, for  \cref{item:CSRdef1}, it would be enough that
$\Gamma {\upharpoonright}_{\assign{C}}\ \models (\Gamma \cup \{C\})| _\sigma$, and, for  \cref{item:CSRdef2},
it would be sufficient to check it for any
\(\tau \supseteq \lnot{C}\) such that  \(\tau\models \Gamma\). Unfortunately, these latter versions of \cref{item:CSRdef1} and \cref{item:CSRdef2} are in general not polynomially checkable. Instead, our conditions are checkable in
polynomial time.
\begin{lem}
\label{lem:CSR-p-time}
Let $\Gamma$ be a MaxSAT instance, $C$ a clause and $\sigma$
a substitution. There is a polynomial time algorithm to decide whether
\(C\) is \CSR \wrt \(\Gamma\), given the substitution~$\sigma$.
\end{lem}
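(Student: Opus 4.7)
There are two conditions to check: the redundancy condition \ref{item:CSRdef1} and the cost condition \ref{item:CSRdef2}. The plan is to verify each in polynomial time given $\sigma$.

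For condition \ref{item:CSRdef1}, I would just compute the two CNF formulas $\Gamma{\upharpoonright}_{\assign{C}}$ and $(\Gamma \cup \{C\}){\upharpoonright}_\sigma$ explicitly (this is a syntactic operation on clauses, linear in the size of the inputs) and then, for each clause $D$ of the right-hand side, run unit propagation on $\Gamma{\upharpoonright}_{\assign{C}} \land \lnot D$ to test whether it derives~$\bot$. Unit propagation runs in polynomial time, and condition \ref{item:CSRdef1} holds iff every such test succeeds.

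The interesting part is condition \ref{item:CSRdef2}, which is universally quantified over all total assignments $\tau \supseteq \assign{C}$. The key observation is that the function $f(\tau) = \cost{\tau \circ \sigma} - \cost{\tau}$ is affine-linear in $\tau$. Indeed, for each blocking variable $b_i$, the value $\sigma(b_i)$ is either a constant in $\{0,1\}$ or a literal, so $\tau(\sigma(b_i))$ is respectively a constant, $\tau(y)$, or $1-\tau(y)$ for some variable $y$. Summing over $i$ gives
\[
f(\tau) \;=\; c_0 \;+\; \sum_{v \in \Var(\Gamma)} c_v\, \tau(v),
\]
where the constant $c_0$ and the coefficients $c_v \in \mathbb{Z}$ are extracted from $\sigma$ in polynomial time (each $v$'s coefficient receives $-1$ if $v$ is itself a blocking variable, plus a contribution of $+1$ or $-1$ for each $i$ such that $\sigma(b_i)$ is the literal $v$ or $\lnot v$, respectively).

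Once $f$ is written as an affine-linear form, maximizing it over $\tau \supseteq \assign{C}$ is trivial: the variables in $\dom(\assign{C})$ are fixed by $\assign{C}$, and each remaining variable $v$ independently contributes $\max(0,c_v)$, obtained by setting $\tau(v)=1$ iff $c_v>0$. Thus
\[
\max_{\tau \,\supseteq\, \assign{C}} f(\tau) \;=\; c_0 \;+\; \sum_{v \in \dom(\assign{C})} c_v\,\assign{C}(v) \;+\; \sum_{v \notin \dom(\assign{C})} \max(0,c_v),
\]
which is computable in polynomial time. Condition~\ref{item:CSRdef2} holds iff this maximum is $\le 0$. I expect the only mild subtlety to be the bookkeeping of coefficients when $\sigma$ maps several blocking variables to the same literal, but this is just counting. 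Combining both checks yields the polynomial-time algorithm.
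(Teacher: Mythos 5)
Your proposal is correct and follows essentially the same route as the paper: the redundancy condition is checked directly by unit propagation, and the cost condition is handled by writing $\cost{\tau\circ\sigma}-\cost{\tau}$ as a separable (affine-linear) function of $\tau$ and maximizing it termwise, with the variables of $\dom(\assign{C})$ fixed and each free variable contributing $\max(0,c_v)$. The paper expresses the coefficients via occurrence counts of literals in the sequences $\langle \sigma(b_1),\dots,\sigma(b_m)\rangle$ and $\langle b_1,\dots,b_m\rangle$ rather than extracting them into a single linear form, but this is only a notational difference.
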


\begin{proof}
  The redundancy condition in \cref{def:CSR} is polynomially
  checkable, since it is a unit propagation. To check the cost
  condition we need to compute
  \begin{equation}\label{eq:max-tau}
    \max_{\tau\supseteq \lnot C}(\cost{\tau\circ\sigma}-\cost{\tau}),
  \end{equation}
  and decide whether this value is at most zero. The value of
  \(\cost{\tau}\) is by definition the number of variables or constants in the
  sequence \(R=\langle b_{1}, b_{2}, \ldots, b_{m} \rangle\) that
  evaluate to \(1\) after applying the assignment \(\tau\).
  The value of \(\cost{\tau\circ\sigma}\) is the same, but for the
  sequence of literals (or constants)
  \(L=\langle \sigma(b_{1}), \sigma(b_{2}), \ldots, \sigma(b_{m})
  \rangle\).
  All the expressions in the sequences \(L\) and \(R\) are either
  constant values or literals, hence each of their evaluations is
  either independent from \(\tau\) or is completely determined by the
  value assigned to the occurring variable.
  It is useful to highlight how much a single variable assignment
  increases the number of \(1\)'s in \(L\) and \(R\).

  For any sequence \(E\) of expressions that can either be \(0\),
  \(1\) or some literal, we use the following notation to indicate how
  many new \(1\)'s occur in \(E\) as a consequence of assigning some
  variable \(v\) to a value.
  \begin{align*}
    \assignvalue{E}{v}{0} =\textrm{number of occurrences of \(\lnot v\) in \(E\)}\\
    \assignvalue{E}{v}{1} =\textrm{number of occurrences of \(v\) in \(E\)}
  \end{align*}
  And with this notation we can write
  \begin{align*}
    \cost{\tau\circ\sigma} & = |\{i\;:\; \sigma(b_{i})=1 \}| +
                               |\{i\;:\; \sigma(b_{i}) \not\in \{0,1\}
                               \textrm{\;and\;} \tau\!\circ\!\sigma(b_{i})=1 \}| \\
                           & = |\{i\;:\; \sigma(b_{i})=1 \}| +
                               \sum_{v}\assignvalue{L}{v}{\tau(v)} \\
    \cost{\tau} & = \sum_{v}\assignvalue{R}{v}{\tau(v)}
  \end{align*}
  Let us rewrite our objective function and separate the part that is
  fixed on all \(\tau \supseteq \lnot {C}\),
  \begin{multline*}
    \cost{\tau\circ\sigma} - \cost{\tau}
    =\overbrace{|\{i\;:\; \sigma(b_{i})=1 \}| +
      \!\!\!\!\sum_{v \in \dom{(\lnot{C})}} \!\!\!\!\bigl( \assignvalue{L}{v}{\tau(v)} -
           \assignvalue{R}{v}{\tau(v)}  \bigr)}^{\textrm{Fixed part \(V\)}} +\\
         +\!\!\!\! \sum_{v \not\in \dom{(\lnot{C})}} \!\!\!\!\left( \assignvalue{L}{v}{\tau(v)} -
           \assignvalue{R}{v}{\tau(v)}  \right) \;.
   \end{multline*}
   It is crucial to observe that to achieve the maximum value of the
   objective~\eqref{eq:max-tau}, the variables not in
   \(\dom(\lnot{C})\) can be set independently from each other. Hence
   \begin{multline*}
     \max_{\tau\supseteq \lnot C}(\cost{\tau\circ\sigma}-\cost{\tau}) = V
     + \sum_{v \not\in \dom{(\lnot{C})}} \max_{k \in \{0,1\}}\left( \assignvalue{L}{v}{k} - \assignvalue{R}{v}{k}  \right) \;.
   \end{multline*}
   In the last equation \(V\) and the summands are easily computable
   in polynomial time, given \(C\) and \(\sigma\).
\end{proof}

\cref{lem:CSR-p-time} allows the definition of proof systems that extend resolution
using cost redundancy rules in the sense of
Cook-Reckhow~\cite{CookReckhow:proofs}.

\begin{defi}[\CSR calculus]\label{def:CSR-calculus}
  The \emph{\CSR calculus} is a proof system for MaxSAT\@.
  A derivation of a clause $C$
  from a MaxSAT instance \(\Gamma\) (encoded with blocking variables) is a sequence of clauses
  \(D_{1}, D_{2}, \ldots, D_{t}\) where, \(C \in \Gamma \cup \{D_{i}\}_{i\in [t]}\),
  each \(D_{i}\) is either
  already in \(\Gamma\) or is deduced from earlier clauses in the
  sequence using a resolution rule, or $D_i$ is \CSR \wrt to
  \(\Gamma \cup \{D_{1}, \ldots, D_{i-1}\}\) with
  \(\Var(D_{i}) \subseteq \Var(\Gamma)\).\footnote{%
    We only consider the case where no new variables are added via
    \CSR rules. To be coherent with~\cite{BT.21}, $\CSR$ should be
    $\CSR^-$, following the notational convention of adding an
    exponent with ``$-$'' to denote \SR, \PR and \SPR when the systems
    are not allowed to introduce new variables. We ignore that
    convention to ease an already rich notation.}
  The length of such derivation is \(t\), \ie the number of derived clauses.
   To check the validity of derivations in polynomial time, any
  application of the \CSR rule comes accompanied by the corresponding
  substitution that witnesses its soundness.
\end{defi}
If the goal is to certify that $\cost{\Gamma}\geq s$, we can accomplish this deriving $s$ distinct unit clauses of the form $\{b_{i_1}, \dots, b_{i_s}\}$ (see  \cref{thm:soundness}).
If the goal is to certify that $\cost{\Gamma}= s$, we can accomplish this deriving  $s$ distinct unit clauses of the form $\{b_{i_1}, \dots, b_{i_s}\}$
  together with the unit clauses
  $\{\lnot b_j : j\notin\{i_1,\dots,i_s\}\}$ (see \cref{thm:soundness}).

In a similar fashion we define \CPR, \CSPR, \CLPR, and \CBC calculi.
A remarkable aspect of these calculi is that a proof must somehow
identify the blocking variables to be set to true. When there are
multiple optimal solutions, it is quite possible that none of the
\(b_{i}\) follows logically from \(\Gamma\). Nevertheless, the
redundancy rules, often used to model ``without loss of
generality'' reasoning, can reduce the solution space.

\subsection{Simulation by veriPB}
\label{sec:veriPB}
We show that \CSR and its subsystems are p-simulated by
\VPB. This is actually
not surprising since \VPB proof steps are based on cutting
planes, which in turn easily simulates
resolution. Furthermore \VPB includes the SAT redundancy rules, and when dealing
with optimization problems, augment them with some criteria of cost preservation.
Indeed, \VPB can argue within the proof itself whether some
substitution is cost preserving. If \VPB can do that for the type of
substitutions that respect the second condition in \cref{def:CSR},
then it p-simulates \CSR.

Since \VPB is a rich system, we only describe the fragment needed to
simulate \CSR.\@ Consider a CNF formula \(F\) and some objective
linear function \(f\) to be minimized.
A \VPB derivation starts with \(F\) encoded as linear inequalities. At each step a linear inequality
\(C\) is derived from the set \(\Gamma\) of previously derived
inequalities in one of the following ways:
\begin{enumerate}
  \item \label{item:vpbder}
   \(C\) is derived from \(\Gamma\) using some cutting planes rule;
  \item \label{item:vpbred}  
  there exists a substitution \(\sigma\) so that
  $\Gamma \cup {\assign{C}}\ \vdash (\Gamma \cup \{C\}){\upharpoonright}_\sigma \cup
  \{f{\upharpoonright}_\sigma \leq f\}$, where the symbol \(\vdash\) means that there is a cutting planes
proof that witnesses the implication.
\end{enumerate}
The cutting plane proof in Item \ref{item:vpbred} is included in the
\VPB proof to make it polynomially verifiable. This is not needed it
such proof just consists in unit propagation steps or certain uses of
Boolean axioms. In the latter cases the verifier can efficiently
recover such derivation steps on its own.

We claim that this fragment of \VPB indeed p-simulates \CSR.
Before proving that we need few additional notations and facts about
\VPB.
Every time we talk of clauses in \VPB we implicitly assume that they are represented via the 
standard encoding as linear inequalities. For instance, \
\(x \lor y \lor \neg{z}\) is represented as \(x + y + \olnot{z} \geq 1\).
Furthermore the system includes the linear inequalities \(0 \leq x \leq 1\), \(1 \leq x + \olnot{x} \leq 1\) for
each variable \(x\) in the formula (the \emph{Boolean axioms}).
The encoding uses formal representation of negative literal and variables, but the intended meaning of $\Var(C)$ is the set of all the original variables, \eg $\Var(x \lor y\lor \lnot z)=\Var(x+y+\lnot z\geq 1)=\{x,y,z\}$.

For a clause $C$ encoded as $\ell\geq 1$, its formal negation $\overline C$ is the inequality $\ell\leq 0$. Using the Boolean axioms cutting planes easily proves that all the literals in $C$ must be less or equal than $0$. 

\begin{prop}
\label{prop:veriPB-simulation}
Let \(F\)  be a MaxSAT instance over \(n\) variables with blocking
  variables \(b_{1}, b_{2}, \ldots, b_{k}\).
  If there is a \(t\) step  \CSR derivation of
  a set of clauses \(\Gamma\) from it, then
  there is a \VPB proof from \(F\)  and objective
  \(f=\sum^{k}_{i=1}b_{i}\) of some \(\Gamma'\supseteq \Gamma\) in
  \(\mathrm{poly}(nt)\) steps.
\end{prop}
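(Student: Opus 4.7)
The plan is to simulate the \CSR derivation step by step in \VPB: each \CSR inference will be translated into $\mathrm{poly}(n,|F|+i)$ \VPB steps, giving $\mathrm{poly}(nt)$ steps in total. We proceed by induction on the length of the \CSR derivation $D_1, \ldots, D_t$, maintaining the invariant that after $i$ steps a \VPB derivation has produced the linear-inequality encodings of $D_1, \ldots, D_i$. The set $\Gamma'$ will be the union of $\Gamma$ with all intermediate clauses derived along the way.

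Axiom and resolution steps are simple: if $D_{i+1}\in F$ it already appears in the initial \VPB inputs, and if $D_{i+1}$ is obtained by the resolution rule from earlier clauses, cutting planes simulates the step in $O(1)$ inferences via the standard simulation. So the interesting case is when $D_{i+1}$ is a \CSR step with witness substitution $\sigma$. Let $F_i = F \cup \{D_1,\ldots,D_i\}$. We apply the \VPB redundancy rule with the same $\sigma$, which requires producing a cutting planes derivation of
\[
F_i \cup \{\lnot D_{i+1}\}\ \vdash\ (F_i \cup \{D_{i+1}\}){\upharpoonright}_\sigma \cup \{f{\upharpoonright}_\sigma \leq f\}.
\]
The first half, $(F_i \cup \{D_{i+1}\}){\upharpoonright}_\sigma$, is given by the \CSR redundancy condition $F_i{\upharpoonright}_{\lnot D_{i+1}} \vdash_1 (F_i \cup \{D_{i+1}\}){\upharpoonright}_\sigma$; each unit-propagation step is simulated by a constant number of cutting planes inferences, so this part of the derivation has size $\mathrm{poly}(n|F_i|)$.

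The main obstacle is the second half, namely deriving $f{\upharpoonright}_\sigma \leq f$ in cutting planes under the hypothesis $\lnot D_{i+1}$. The idea is to rewrite $f - f{\upharpoonright}_\sigma = \sum_j (b_j - \sigma(b_j))$ and group the terms by the variables occurring in the literals/constants $\sigma(b_j)$. For each variable $v$, the contribution has the form $p_v\, v + q_v\, \lnot v + k_v$ for integers $p_v,q_v,k_v$ read directly from $\sigma$ and the list of blocking variables. Using the Boolean axioms $v\geq 0$, $\lnot v\geq 0$, and $v+\lnot v=1$, cutting planes derives $p_v\,v+q_v\,\lnot v \geq \min(p_v,q_v)$ in $O(1)$ inferences, by splitting off a $\min(p_v,q_v)$ multiple of the equality $v+\lnot v=1$ so that the residual coefficients $p_v-\min(p_v,q_v)$ and $q_v-\min(p_v,q_v)$ become nonnegative and can be combined with $v,\lnot v\geq 0$.

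Variables in $\dom(\lnot D_{i+1})$ are fixed by the hypothesis, so their contributions collapse to integer constants. Summing all the variable bounds yields
\[
f-f{\upharpoonright}_\sigma \ \geq\ V + \sum_{v\notin\dom(\lnot D_{i+1})} \min(p_v,q_v),
\]
and the \CSR cost condition, through the same variable-by-variable decomposition used in the proof of \Cref{lem:CSR-p-time}, guarantees that the right-hand side is $\geq 0$. Since each variable contributes only $O(1)$ cutting planes steps, the cost derivation has size $O(n)$. Putting the three cases together, each \CSR step is simulated in $\mathrm{poly}(n,|F|+t)$ \VPB steps, hence the full simulation has $\mathrm{poly}(nt)$ steps.
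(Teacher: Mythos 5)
Your proof is correct and follows essentially the same route as the paper's: induction on the derivation, reusing the witness $\sigma$ in the \VPB redundancy rule, discharging the redundancy half via unit propagation, and proving $f{\upharpoonright}_\sigma \leq f$ by a variable-by-variable decomposition bounded via the Boolean axioms, with the cost condition certifying that the resulting constant is nonpositive. The only cosmetic differences are that the paper handles resolution/weakening steps through the redundancy rule with the empty substitution (as reverse unit propagation) rather than a direct cutting-planes simulation, and it first eliminates negated literals so that the per-variable bound reads $c_v\cdot v \leq \max\{c_v,0\}$ instead of your equivalent $p_v v + q_v\lnot{v} \geq \min(p_v,q_v)$.
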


\begin{proof}
  We proceeed by induction on \(t\). If \(t=0\) then \(\Gamma\) is
  \(F\) itself plus the Boolean axioms, which is the initial
  configuration of the \VPB proof.
  
  For \(t>0\) we can assume that the \CSR derivation has obtained \(\Gamma\) up to
  step \(t-1\) and now at step \(t\) derives a clause \(C\) using
  either resolution, weakening or the redundancy rule, according to \cref{def:CSR}.
  For the first two cases, observe that $\Gamma\cup \overline C$ is
  unsatisfiable and immediately implies contradiction by unit
  propagation, and therefore by a short cutting planes proof. Hence,
  in both cases rule \ref{item:vpbred} of \VPB proves \(C\), with
  $\sigma$ being the empty~substitution.

  The last case is when \(C\) is a cost preserving redundant clause
  with respect to $\Gamma$, witnessed by substitution \(\sigma\).
  That is
  \begin{enumerate}
	\item $\Gamma {\upharpoonright}_{\assign{C}}\ \vdash_1 (\Gamma \cup \{C\})| _\sigma$, and
	\item for all total assignments $\tau\supseteq \assign{C}$,
    $\cost{\tau \circ \sigma} \leq \cost{\tau}$.\hfill(\emph{cost condition})
\end{enumerate}
  
We show that \(C\) can be derived by the
  redundancy inference rule of \(\VPB\) from $\Gamma$ using the same substitution $\sigma$, that is
  \begin{equation}\label{eq:veriPBcond}
    \Gamma \cup {\assign{C}}\
    \vdash
    (\Gamma \cup \{C\}){\upharpoonright}_\sigma \cup \{f{\upharpoonright}_\sigma \leq f\}\;,
  \end{equation}
  and the $\vdash$ is witnessed by a short cutting planes proof.
  One part of claim in~\eqref{eq:veriPBcond}, namely the implication  $\Gamma \cup {\assign{C}}\
    \vdash
    (\Gamma \cup \{C\}){\upharpoonright}_\sigma$, follows from the unit propagation
  \(\Gamma  \cup {\assign{C}} \vdash
  \Gamma{\upharpoonright}_{\assign{C}}\) and from the assumption $\Gamma {\upharpoonright}_{\assign{C}}\ \vdash_1 (\Gamma \cup \{C\})| _\sigma$.
  
  The last part of claim \eqref{eq:veriPBcond} is the implication  \(\Gamma \cup {\assign{C}} \vdash \{f{\upharpoonright}_{\sigma} \leq f\}\).
  The  goal inequality \(f{\upharpoonright}_{\sigma} \leq f\) is indeed    \[
     \sum_{i=1}^k \sigma(b_{i}) - \sum_{i=1}^k b_{i} 
     \leq 0\;.
   \]
   Modulo the assumption $\assign{C}$  this is provably equivalent to 
     \begin{equation}
     	\label{eq:tobemaximized}
      \Big(\sum_{i=1}^k \sigma(b_{i}) - \sum_{i=1}^k b_{i}\Big)\Big\rvert_{\assign{C}}
     \leq 0 \;.
      \end{equation}

  The quantity
  \(\max_{\tau\supseteq \lnot C}(\cost{\tau\circ\sigma}-\cost{\tau})\)
  is the maximum achievable by the LHS of the inequality in \eqref{eq:tobemaximized} over all assignments on variables in
  \(\Var{(\Gamma)}\setminus \Var{(\assign{C})}\).  
  Using the Boolean axioms $x+\bar x=1$ we  remove negative literals form  the LHS of \eqref{eq:tobemaximized}  obtaining an equivalent representation of the form
    \begin{equation}
    	\label{eq:linearform}
    c+\sum_{v \in \Var{(\Gamma)}\setminus \Var{(\assign{C})}} c_{v} \cdot
  v\ ,
      \end{equation}
where $c$ and $c_v$s are suitable integer constants.
   The maximum is
   achieved exactly when the contribution of each variable
   \(v \not\in \Var(\lnot{C})\) is maximized. Hence
  \begin{equation}\label{eq:hasbeenmaximized}
    \max_{\tau\supseteq \lnot C}(\cost{\tau\circ\sigma}-\cost{\tau}) =
    c + \sum_{v  \in \Var{(\Gamma)}\setminus \Var{(\assign{C})}} \max\{c_v,0\}\;.
   \end{equation}
   Simple applications of the Boolean axioms give cutting planes proof of
   \(
     c_v \cdot v \leq
     \max\{c_v,0\} 
   \) for each \(v \not\in \Var(\lnot{C})\). The latter inequalities, summed
   together, prove that the LHS of~\eqref{eq:tobemaximized} is at most
   \eqref{eq:hasbeenmaximized}.
   Observe that formula~\eqref{eq:hasbeenmaximized} has
   a constant value at most zero, by the cost condition assumption. Therefore we have produced a cutting planes proof of  the inequality in \eqref{eq:tobemaximized}.
\end{proof}

\section{Soundness and completeness}
\label{sec:snd-complet}
The calculi \CSR/\CPR/\CSPR are \emph{sound} and \emph{complete}.
 Before proving the soundness we show an auxiliary lemma, that shows that when the calculus certifies the lower bound for the optimal values, it can also certify the optimality.

\begin{lem}\label{lem:neg-bs}
Let  \(\Gamma\) be a MaxSAT instance encoded with blocking variables $b_1,\dots,b_m$, of $\cost{\Gamma}=k$, and suppose $\Gamma$ contains the unit clauses $b_{i_1},\dots,b_{i_k}$. Then \CPR can  prove $\lnot b_j$ for each $j\notin\{i_1,\dots,i_k\}$ in $\mathcal O(km)$ steps.
\end{lem}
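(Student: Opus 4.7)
The plan is to derive each $\lnot b_j$ with $j \notin \{i_1,\dots,i_k\}$ by a single \CPR inference, which already gives the stated polynomial bound (at most $m-k$ derived clauses).

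First, I would pin down an optimal total assignment $\alpha^{*}$ with very rigid structure. Since $\Gamma$ contains the unit clauses $b_{i_1},\dots,b_{i_k}$ and $\cost{\Gamma}=k$, any satisfying assignment sets all of $b_{i_1},\dots,b_{i_k}$ to $1$, contributing $k$ to its cost; so an optimal $\alpha^{*}$ must set precisely those blocking variables to $1$ and all other blocking variables to $0$. Consequently, for every $i \notin \{i_1,\dots,i_k\}$ the hard clause $C_i \lor b_i$ is satisfied by $\alpha^{*}$ via $C_i$ alone.

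For each such $j$, I would choose the witness substitution $\sigma_j$ mapping $b_j \mapsto 0$, mapping every non-blocking variable $x$ to $\alpha^{*}(x)$, and leaving the remaining blocking variables unassigned. The cost condition of \cref{def:CSR} is immediate: for any total $\tau \supseteq \assign{\lnot b_j}$ one has $(\tau\circ\sigma_j)(b_j)=0$ and $(\tau\circ\sigma_j)(b_i)=\tau(b_i)$ for $i\neq j$, hence $\cost{\tau \circ \sigma_j} = \cost{\tau} - 1$.

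The crux is the redundancy condition $\Gamma{\upharpoonright}_{\assign{\lnot b_j}} \vdash_1 (\Gamma \cup \{\lnot b_j\}){\upharpoonright}_{\sigma_j}$, and even that collapses under the case analysis driven by $\alpha^{*}$. Under $\sigma_j$ the clause $\lnot b_j$ becomes $\top$; every $C_i \lor b_i$ with $i \notin \{i_1,\dots,i_k\}$ (in particular $i=j$) is satisfied via $C_i$; and every $C_{i_l} \lor b_{i_l}$ is either satisfied or collapses to the unit clause $b_{i_l}$. So the right-hand side reduces to a multiset of units drawn from $\{b_{i_1},\dots,b_{i_k}\}$, each of which is literally present in $\Gamma{\upharpoonright}_{b_j=1}$ and thus trivially derivable by $\vdash_1$. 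I anticipate no real obstacle: the entire argument is a short case analysis made possible by the rigid optimal assignment that the hypotheses force to exist.
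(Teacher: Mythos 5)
Your proof is correct for the statement as written, but it takes a genuinely different route from the paper's. You derive each $\lnot b_j$ in a single \CPR step whose witness $\sigma_j$ fixes $b_j=0$ and all non-blocking variables (to their values under the rigid optimal assignment $\alpha^{*}$) while leaving the other blocking variables free; the residual units $b_{i_\ell}$ on the right-hand side of the redundancy condition are then discharged because they are literally present in $\Gamma{\upharpoonright}_{b_j=1}$. The paper instead first derives the wider clause $C_j=\lnot b_{i_1}\lor\dots\lor\lnot b_{i_k}\lor\lnot b_j$ by \CPR, using the \emph{total} optimal assignment as witness --- under which the entire right-hand side $(\Gamma\cup\{C_j\}){\upharpoonright}_\sigma$ evaluates to true, so the redundancy condition is vacuous --- and only afterwards resolves $C_j$ against the units $b_{i_1},\dots,b_{i_k}$ to get $\lnot b_j$. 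Your version is more economical (one redundancy step per literal, and no detour through the wide clauses), but the paper's is more robust: since its witness satisfies every clause, the redundancy check holds no matter what else the clause database contains, and the lemma is in fact invoked later (in \cref{thm:PR} and \cref{thm:PHP-mn}) on databases enriched with derived clauses in which blocking variables occur negatively or several to a clause. For such clauses your exhaustive case analysis (``the right-hand side reduces to units among $b_{i_1},\dots,b_{i_k}$'') no longer covers everything: a clause surviving $\sigma_j$ could, for instance, reduce to a negative blocking literal $\lnot b_{j'}$ that is not obviously derivable by unit propagation from the left-hand side. For $\Gamma$ literally of the blocking-variable form assumed in the statement, your argument is complete and the step count $m-k=\mathcal O(m)$ is well within the claimed bound.
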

\begin{proof}
  Let $\sigma$ be a total assignment satisfying $\Gamma$ with
  $\cost{\sigma}=k$, that is $\sigma$ maps all the $b_{i_\ell}$s to
  $\ltrue$ and the other blocking variables to~$\lfalse$.
We derive all the clauses \[
	C_j=\lnot b_{i_1}\lor \dots\lor \lnot b_{i_k}\lor \lnot b_j
  \] with $j\notin\{i_1,\dots,i_k\}$ using the \CPR rule.
  For all clauses $C_j$,  the substitution witnessing the validity of the \CPR rule  is always $\sigma$. The redundancy condition from \cref{def:CSR} is trivially true since $\Gamma$ union an arbitrary set of $C_j$s is mapped to $\ltrue$ under $\sigma$.
	The cost condition is true because for every $\tau\supseteq \assign{C_j}$, $\cost{\tau}\geq k+1$ and $\cost{\tau\circ \sigma}=\cost{\sigma}=k$.
	
	To conclude, by resolution, derive $\lnot b_j$ from $C_j$ and  the unit clauses $b_{i_1},\dots,b_{i_k}$.
\end{proof}

\begin{thm}[soundness of \CSR]\label{thm:soundness}
  Let $\Gamma$ be a MaxSAT instance encoded with  blocking variables.
 If there is a \CSR proof of
 $k$ distinct blocking variables, then
  $\cost{\Gamma}\geq k$.
   If there is a \CSR proof of
 $k$ distinct blocking variables $\{b_{i_1},\dots,b_{i_k}\}$ and all the unit clauses $\lnot b_j$ for $j\notin\{i_1,\dots,i_k\}$, then
  $\cost{\Gamma}= k$.
\end{thm}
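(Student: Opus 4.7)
The plan is to prove both parts by induction on the length $t$ of the \CSR derivation, establishing the invariant $\cost{\Gamma} = \cost{\Gamma \cup \{D_1, \ldots, D_i\}}$ for every $i \in \{0,1,\ldots,t\}$. The base case $i = 0$ is trivial. For the inductive step, I would split into three cases according to how $D_i$ was produced. If $D_i \in \Gamma$, the formula is unchanged. If $D_i$ is obtained from earlier clauses by a resolution rule, then $D_i$ is a logical consequence of $\Gamma \cup \{D_1, \ldots, D_{i-1}\}$, so the two formulas have the same satisfying assignments and therefore the same minimum cost. If $D_i$ is \CSR with respect to $\Gamma \cup \{D_1, \ldots, D_{i-1}\}$, then \cref{lem:CSR-redundant} applied to $\Gamma \cup \{D_1,\ldots,D_{i-1}\}$ directly yields $\cost{\Gamma \cup \{D_1,\ldots,D_{i-1}\}} = \cost{\Gamma \cup \{D_1,\ldots,D_i\}}$, and the inductive hypothesis closes the step.

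With the invariant in hand, the first claim is immediate: the formula $\Gamma \cup \{D_1, \ldots, D_t\}$ contains the $k$ unit clauses $b_{i_1}, \ldots, b_{i_k}$, so any total assignment satisfying it must set all of $b_{i_1}, \ldots, b_{i_k}$ to $\ltrue$, and therefore has cost at least $k$. Combining this with the invariant gives $\cost{\Gamma} = \cost{\Gamma \cup \{D_1, \ldots, D_t\}} \geq k$.

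For the second claim, the augmented formula additionally contains the unit clauses $\lnot b_j$ for every $j \notin \{i_1, \ldots, i_k\}$, so any assignment satisfying it assigns $\ltrue$ to exactly the $k$ blocking variables $b_{i_1},\ldots,b_{i_k}$ and $\lfalse$ to all others, contributing cost exactly $k$. Since $\Gamma$ is assumed satisfiable and the invariant transfers this to the augmented formula (the cost remains finite), such an assignment exists, so $\cost{\Gamma \cup \{D_1, \ldots, D_t\}} = k$, and the invariant then gives $\cost{\Gamma} = k$. I do not anticipate a real obstacle; the only delicate point is the resolution case of the induction, where one must be careful to note that adding a logical consequence leaves the set of satisfying assignments — and hence the minimum cost — unchanged.
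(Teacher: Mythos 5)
Your proposal is correct and follows essentially the same route as the paper's proof: both reduce to the observation that, by \cref{lem:CSR-redundant} (together with the triviality of the axiom and resolution steps), the cost is preserved along the derivation, so $\cost{\Gamma}$ equals the cost of $\Gamma$ augmented with all derived clauses, which is at least $k$ (resp.\ exactly $k$) once the relevant unit clauses are present. Your version merely makes the induction and the case analysis explicit where the paper compresses them into one sentence.
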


\begin{proof}
Let  \(b_{1}, \ldots, b_{m}\) be the blocking variables of $\Gamma$.
  Let \(\Gamma'\) the set of clauses in $\Gamma$ plus all the clauses
  derived in the proof of $\cost{\Gamma}\geq k$. That is  \(\Gamma'\) also contains
  contains $k$ distinct unit clauses $b_{i_1},\dots, b_{i_k}$, hence $\cost{\Gamma^\prime}\geq k$. By \cref{lem:CSR-redundant}, the cost is
  preserved along proof steps, therefore
  \(\cost{\Gamma}=\cost{\Gamma'}\geq k\).
  In the case where we have all the $\lnot b_j$s then $\cost{\Gamma^\prime}=k$ and therefore $\cost{\Gamma}=k$.
\end{proof}

As an immediate consequence of \cref{thm:soundness}, also all the \CPR, \CSPR, \CLPR calculi are sound.
Moreover, we can always prove the optimal lower bound in \(\CSPR\)~calculus.

\begin{thm}[completeness of \CSPR]
\label{thm:SPR-completeness}
  Let $\Gamma$ be a MaxSAT  instance encoded with blocking variables, of $\cost{\Gamma}=k$.
There is a \CSPR derivation of the unit clauses $b_{i_1},\dots, b_{i_k}$ for some distinct $k$
  blocking literals and all the $\lnot b_j$ for $j\notin\{i_1,\dots,i_k\}$.
\end{thm}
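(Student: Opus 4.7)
Let me sketch the proof plan.

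Fix an optimal total assignment $\alpha$ of $\Gamma$ with $\cost{\alpha}=k$, and let $I=\{i\in [m] : \alpha(b_i)=1\}$, so that $|I|=k$. The goal is to produce a \CSPR derivation of the unit clauses $\{b_i : i\in I\}\cup\{\lnot b_j : j\in [m]\setminus I\}$. The plan is to work in two phases: first derive all the ``cost non-increasing'' $\alpha$-literals by \CSPR; then derive the remaining $b_i$ for $i\in I$ by resolution from the accumulated formula.

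\textbf{Phase 1.} We derive, by \CSPR, the unit clauses $\alpha(v)$ for every non-blocking variable $v$ and $\lnot b_j$ for every $j\notin I$. Each such step uses the \CSPR rule with $C$ equal to the target literal and witness $\sigma$ flipping the underlying variable to its $\alpha$-value. The cost condition is immediate in both cases: there is no change when $v$ is non-blocking, and a strict decrease when $v=b_j$ with $j\notin I$. The substantive part is the redundancy condition $\Gamma'|_{\lnot C}\vdash_1\Gamma'|_\sigma$, which does not hold against an arbitrary $\Gamma$. To close it, we interleave each target unit derivation with short auxiliary \CSPR derivations that propagate the information carried by $\alpha$ through $\Gamma$. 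The archetypal auxiliary step is the following: if a clause $D\in\Gamma$ is satisfied by $\alpha$ via some non-blocking literal $\ell$, then the clause $D\setminus\{\ell\}$ is itself \CSPR-derivable using the witness $\sigma$ of domain $\Var(D\setminus\{\ell\})$ that flips the single variable of $\ell$ (the cost condition is trivial since $\ell$ is non-blocking, and the redundancy condition becomes vacuous because $\sigma$ satisfies $D$). Adding enough such auxiliaries reduces the redundancy check for each target unit step to a routine unit propagation, exactly as in the \SPR completeness argument for SAT from \cite{BT.21}.

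\textbf{Phase 2.} For each $i\in I$, we derive $b_i$ by pure resolution (a sub-system of \CSPR) from the clauses accumulated in Phase~1. The point is that the formula $\Gamma'$ at the end of Phase~1 semantically implies $b_i$: any $\beta\models\Gamma'$ must agree with $\alpha$ on every non-blocking variable and on $b_j$ for each $j\notin I$; if in addition $\beta(b_i)=0$ for some $i\in I$, then $\cost{\beta}\leq |I|-1=k-1<k=\cost{\alpha}$, contradicting the optimality of $\alpha$. Completeness of resolution then yields a derivation of $b_i$ from $\Gamma'$.

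\textbf{Main obstacle.} The real work is in Phase~1: a direct \CSPR step with witness $\sigma$ flipping a single variable to its $\alpha$-value satisfies the cost condition by construction, but the redundancy condition is not automatic, and closing it requires the careful sequence of auxiliary \CSPR derivations sketched above. Once Phase~1 is in place, Phase~2 is essentially automatic. A subtlety to handle is the case where $\alpha$ is not unique (so some non-blocking variable is genuinely underdetermined): there the auxiliary clauses fix a canonical $\alpha$ before the unit propagation can succeed, which is why the two-phase ordering is crucial.
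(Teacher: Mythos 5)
Your overall shape---fix an optimal assignment $\alpha$, force the formula to pin it down, then finish by resolution completeness---is the same as the paper's, but your Phase~1 is where the two part ways, and as written it does not work. First, the ``archetypal auxiliary step'' is not a legal \CSPR step: the witness is required to have domain $\Var(D\setminus\{\ell\})$ yet to flip the variable of $\ell$, which is not in that domain; and under the natural repair (take $\sigma=\assign{D\setminus\{\ell\}}\cup\{\ell\mapsto 1\}$) the witness falsifies the derived clause $C=D\setminus\{\ell\}$, so $C{\upharpoonright}_\sigma=\bot$ and the redundancy condition demands $\Gamma{\upharpoonright}_{\assign{C}}\vdash_1\bot$, which is false in general. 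Worse, removing the $\alpha$-satisfied literal produces a clause \emph{falsified} by $\alpha$ (from $x\lor y$ with $\alpha(x)=1$, $\alpha(y)=0$ you would derive the unit $y$), which destroys the very assignment you are trying to isolate. Second, and more fundamentally, the paper's own \cref{thm:flip-bound} and \cref{thm:small-SR-incompleteness} rule out your plan: on the formula $F_n$ there, the two optimal assignments are at Hamming distance $\Omega(n)$, so the \emph{first} derived clause incompatible with one of them must have a witness with $\flip(C,\sigma)=\Omega(n)$, hence in \CSPR must have width $\Omega(n)$. Your Phase~1 consists entirely of unit clauses and subclauses of $\Gamma$ whose witnesses flip a single variable, so no amount of such auxiliaries can make the first unit derivable; the appeal to ``the \SPR completeness argument for SAT from [BT.21]'' does not point to an argument that fills this hole (in the SAT setting completeness is inherited from resolution and no analogous unit-fixing lemma is proved there).

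The paper sidesteps all of this by deriving \emph{full-width} clauses rather than units: for every total assignment $\gamma\neq\alpha_{\mathrm{opt}}$ satisfying $\Gamma$, it adds the widest clause $\bar\gamma$ falsified exactly by $\gamma$, using $\alpha_{\mathrm{opt}}$ itself as the \SPR witness (legal, since $\lnot\bar\gamma=\gamma$ and $\alpha_{\mathrm{opt}}$ are both total, hence have the same domain). The redundancy condition is then vacuous---the right-hand side restricts by $\alpha_{\mathrm{opt}}$, which satisfies everything---and the cost condition is exactly optimality of $\alpha_{\mathrm{opt}}$. After all of $\Sigma$ is added, $\alpha_{\mathrm{opt}}$ is the unique satisfying assignment and resolution derives all its literals. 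This derivation is exponentially long, which is fine for a completeness statement; if you want to keep your unit-by-unit strategy you must first derive wide ``symmetry-breaking'' clauses with many-variable witnesses, and that is essentially a different proof that needs to be supplied.
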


\begin{proof}
Let \(b_{1}, \ldots, b_{m}\) be the blocking variables of $\Gamma$.
  Take $\alpha_{\mathrm{opt}}$ to be an optimal assignment, that is
  $\alpha_{\mathrm{opt}}\models \Gamma$,
  $\cost{\alpha_{\mathrm{opt}}} = k$, and for every total assignment
  $\beta$ that satisfies~\(\Gamma\), $\cost{\beta}\geq k$.
  Without loss of generality we can assume $\alpha_{\mathrm{opt}}$
  sets variables $b_1,\dots,b_k$ to $\ltrue$ and the remaining $b_j$s to $\lfalse$.

  Given any assignment $\gamma$, let $\bar \gamma $ be the largest clause
  falsified by $\gamma$. Let $\Sigma$ be the set of all clauses
  $\bar \gamma $ where $\gamma$ is a total assignment that satisfies
  $\Gamma$ is different from $\alpha_{\mathrm{opt}}$.
  We want to derive \(\Sigma\) from \(\Gamma\), essentially forbidding
  any satisfying assignment except for \(\alpha_{\mathrm{opt}}\).

  We can add all clauses in $\Sigma$ one by one by the \CSPR rule.
  Indeed, for any clause $\bar \gamma \in \Sigma$ and any
  $\Sigma'\subseteq \Sigma$, the clause $\bar \gamma $ is \CSPR \wrt
  $\Gamma \cup \Sigma'$, with \(\alpha_{\mathrm{opt}}\) as the
  witnessing assignment.
  The redundancy condition
  \[
    (\Gamma \cup \Sigma'){\upharpoonright}_{\gamma}\vdash_1 (\Gamma \cup \Sigma'\cup
    \bar \gamma ){\upharpoonright}_{\alpha_{\mathrm{opt}}}
  \]
  holds because
  $\alpha_{\mathrm{opt}}\models \Gamma \cup\Sigma'\cup \bar \gamma $ so
  the RHS is just true.
  The cost condition holds by optimality of $\alpha_{\mathrm{opt}}$.
  In the end, the only assignment that satisfies \(\Gamma \cup \Sigma\)
  is $\alpha_{\mathrm{opt}}$. By the completeness resolution
  we can prove all its literals, in particular literals \(b_{i}\) for
  \(1 \leq i \leq k\) and literals \(\lnot b_{i}\) for \(i>k\).
\end{proof}

\section{Incompleteness and Width Lower bounds}
\label{sec:lower-bounds}
\cref{thm:SPR-completeness} shows the completeness of \CSPR,
hence for \CPR and \CSR. 
It is not a coincidence that the proof does not apply to \CLPR, indeed
the latter is not complete. We will see that for some redundant
clause derived according to~\cref{def:CSR}, the number of values that a 
partial assignment \(\sigma\) flips with respect to any
\(\tau \supseteq \assign{C}\) may have to be large. In a \CLPR proof
this number is always at most one.
For a redundant clause \(C\) derived in some subsystem of \CSR via
a witness substitution \(\sigma\) we define
\begin{equation*}
    \flip(C,\sigma) = \max_{\tau \supseteq \assign{C}} \;\HD(\tau,\tau \circ \sigma)\;,
\end{equation*}
where \(\HD\) is the number of different bits between two total assignments.
The following result shows sufficient conditions for $\flip(C,\sigma)$ to be large.

\begin{thm}
\label{thm:flip-bound}
 Let $\Gamma$ be a MaxSAT instance encoded with blocking variables, of
  $\cost{\Gamma}=k$, and let $A$ be the set of optimal total assignments
  for \(\Gamma\), \ie $\alpha \in A$ when $\alpha \models \Gamma$ and
  $\cost{\alpha}=k$. If $A$ is such that
  \begin{enumerate}
	\item all pairs of assignments in \(A\) have Hamming distance at least $d$, and
	\item for every blocking variable $b$ there are $\alpha,\beta\in
    A$ s.t. $\alpha(b)=\lfalse$ and $\beta(b)=\ltrue$,
  \end{enumerate}
  then to derive any blocking literal $b$, \CSR must derive
  a redundant clause with \(\flip(C,\sigma) \geq d\), where
  \(\sigma\) is the witnessing substitution for \(C\).
\end{thm}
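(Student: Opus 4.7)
The plan is to follow any \CSR derivation $D_1, D_2, \ldots, D_t$ from $\Gamma$ that produces some blocking literal $\ell$, and to pinpoint a single step whose redundant clause must have large flip-value. For each $i$, let $A_i$ denote the set of total assignments satisfying $\Gamma \cup \{D_1, \ldots, D_i\}$ with cost exactly $k$. By \cref{lem:CSR-redundant} (cost preservation under \CSR, and trivially under resolution), every $A_i$ is contained in $A_0 = A$, and the chain $A_0 \supseteq A_1 \supseteq \cdots \supseteq A_t$ is non-increasing. Condition~2 rules out $\ell \in \Gamma$, since otherwise $\Gamma \models \ell$ would contradict the existence of an optimal assignment falsifying $\ell$; hence $\ell$ appears as some $D_{i_0}$, and any $\alpha \in A$ with $\alpha \not\models \ell$ lies in $A_0 \setminus A_t$. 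So there is a smallest $i^* \leq t$ with $A_{i^*} \subsetneq A_{i^*-1}$; pick $\alpha \in A_{i^*-1} \setminus A_{i^*}$, so that $\alpha \models \Gamma \cup \{D_1, \ldots, D_{i^*-1}\}$ while $\alpha \supseteq \assign{D_{i^*}}$.

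The critical case analysis is on the rule used at step $i^*$. If $D_{i^*}$ were obtained by resolution, then $\Gamma \cup \{D_1, \ldots, D_{i^*-1}\} \models D_{i^*}$, contradicting $\alpha \not\models D_{i^*}$. So $D_{i^*}$ is derived by the \CSR rule with some witness substitution $\sigma$. The cost condition of \cref{def:CSR} yields $\cost{\alpha \circ \sigma} \leq \cost{\alpha} = k$. The redundancy condition
\[
  (\Gamma \cup \{D_1,\ldots,D_{i^*-1}\}){\upharpoonright}_{\assign{D_{i^*}}}\ \vdash_1\ (\Gamma \cup \{D_1,\ldots,D_{i^*-1}\} \cup \{D_{i^*}\}){\upharpoonright}_\sigma\;,
\]
combined with $\alpha \supseteq \assign{D_{i^*}}$ satisfying the antecedent, forces $\alpha \circ \sigma \models \Gamma \cup \{D_{i^*}\}$. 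Since $\alpha \circ \sigma$ is total and of cost at most $k$, it lies in $A$.

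Finally, $\alpha$ falsifies $D_{i^*}$ while $\alpha \circ \sigma$ satisfies $D_{i^*}$, so the two assignments differ; by condition~1 they have Hamming distance at least $d$. Since $\alpha$ is a specific total assignment extending $\assign{D_{i^*}}$, $\flip(D_{i^*}, \sigma) \geq \HD(\alpha, \alpha \circ \sigma) \geq d$, as required. I expect the main subtlety to be the bookkeeping around the redundancy condition: verifying that $\alpha$ indeed witnesses the antecedent of $\vdash_1$, that composing $\alpha$ with the substitution $\sigma$ produces another total assignment, and that this composed assignment is forced into $A$ rather than merely into the solution set of the current formula. Once that is in place, condition~1 on the pairwise Hamming distances finishes the argument.
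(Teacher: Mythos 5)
Your proposal is correct and follows essentially the same strategy as the paper's proof: locate the first step at which the set of optimal (cost-$k$) satisfying assignments strictly shrinks, note that this step must be a \CSR step, and use the cost and redundancy conditions to show that the lost optimal assignment $\alpha$ is mapped by $\sigma$ to another member of $A$, whence condition~1 forces $\HD(\alpha,\alpha\circ\sigma)\geq d$. The only cosmetic difference is that you track the sets $A_i$ directly while the paper tracks their cardinalities $\mu(j)$, and you make explicit the (implicit in the paper) observation that $\alpha$ and $\alpha\circ\sigma$ are distinct because they disagree on the newly derived clause.
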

\begin{proof}
 Consider a \CSR derivation from $\Gamma$ as a sequence
  $\Gamma_{0}, \Gamma_{1}, \dots, \Gamma_s$ where each
  \(\Gamma_{i+1}:=\Gamma_{i} \cup \{C\}\) with \(C\) either derived
  by resolution from clauses in \(\Gamma_{i}\), or $C$ is \CSR \wrt
  \(\Gamma_{i}\).
  For $0 \leq j \leq s $, let $\mu(j)$ be the number of the optimal
  assignments for \(\Gamma_{j}\).

  At the beginning \(\mu(0)=|A|\) by construction. If at some point
  \(\Gamma_{j}\) contains some clause~\(b_{i}\), then the value
  \(\mu(j)\) must be strictly smaller than \(|A|\) because \(A\)
  contains at least some assignment with
  \(\{b_{i} \mapsto \lfalse\}\) (by assumption 2).

  Let \(j\) be the first step where $\mu(j)$ drops below $|A|$. 
  The clause $C$ introduced at that moment must be \CSR \wrt
  $\Gamma_{j-1}$, because the resolution steps do not change the set
  of optimal assignments. Let then $\sigma$ be the witnessing
  substitution used to derive \(C\), we have
  \begin{equation}\label{eq:pi_j}
	\Gamma_{j-1}{\upharpoonright}_{\assign{C}}\ \ \vdash_1 (\Gamma_{j-1}\cup \{C\}){\upharpoonright}_\sigma\ .
  \end{equation}

  Since $\mu(j)$ dropped below \(|A|\), clause $C$ must be
  incompatible with some $\tau \in A$, that is
  $\tau \supseteq \lnot{C}$. Therefore, by cost preservation,
  \begin{equation}\label{eq:cost-k}
    \cost{\tau \circ \sigma} \leq \cost{\tau}=k\;.
  \end{equation}
  
  By \cref{obs:unit-prop-restriction}, eq.~\eqref{eq:pi_j} implies that
  \[
    \Gamma_{j-1}{\upharpoonright}_\tau\ \  \vdash_1 (\Gamma_{j-1}\cup \{C\}){\upharpoonright}_{\tau \circ \sigma}\;.
  \]
  Since $j$ was the first moment when $\mu(j)<|A|$ we have that
  $\tau\models \Gamma_{j-1}$ and therefore
  $\tau \circ \sigma \models \Gamma_{j-1}\cup \{C\}$. In particular,
  $\tau \circ \sigma\models \Gamma$.
  By eq. \eqref{eq:cost-k} then
  it must be $\tau\circ \sigma\in A$. But then, by assumption 1,
  $\tau$ and $\tau\circ \sigma$ have Hamming distance at least
  $d$, which implies \(\flip(C,\sigma)\geq d\).
\end{proof}

We see an example application of this result to \CLPR, where the
number of values allowed to be flipped by the rule is at most \(1\). 

The formula
$F=\{x\lor y \lor b_{1}, \lnot x \lor b_{2}, \lnot y \lor b_{3}\}$ has
cost \(1\) and its optimal assignments to variables
\(x,y,b_{1},b_{2},b_{3}\) are \(\{00100,10010,01001\}\).
These assignments fulfil the premises of \cref{thm:flip-bound}, with Hamming distance \(> 1\). Therefore \CLPR cannot prove the cost of $F$ to
be \(1\), and hence \CLPR is incomplete.

\begin{cor}
  \label{thm:CLPR-incompleteness}
  Proof systems \CLPR, \CBC are incomplete.
\end{cor}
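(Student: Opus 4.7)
The plan is to exhibit the explicit formula $F=\{x\lor y\lor b_{1},\, \lnot x\lor b_{2},\, \lnot y\lor b_{3}\}$ mentioned in the paragraph preceding the corollary and apply \cref{thm:flip-bound} to it. First I would verify that $\cost{F}=1$ and enumerate the set $A$ of optimal total assignments over $x,y,b_{1},b_{2},b_{3}$ as $\{00100,\,10010,\,01001\}$: each must set at least one blocking variable to $\ltrue$ because $F$ is unsatisfiable if all three $b_i$ are $\lfalse$, and each assignment in $A$ sets exactly one of them. A direct pairwise comparison shows the three assignments have pairwise Hamming distance at least $3$, so condition~1 of \cref{thm:flip-bound} holds with $d=3$.

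Next I would check condition~2: for each blocking variable $b_{i}$, there is an assignment in $A$ with $b_{i}\mapsto \ltrue$ (the one that ``pays'' for it) and two with $b_{i}\mapsto \lfalse$, so both polarities occur in $A$. Thus \cref{thm:flip-bound} applies, yielding that in any \CSR derivation from $F$ that produces a blocking literal, there must be a redundancy step introducing some clause $C$ with a witnessing substitution $\sigma$ such that $\flip(C,\sigma)\geq 3$.

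Now I would observe that in a \CLPR derivation the witnessing $\sigma$ of any redundancy step has the same domain as $\assign{C}$ and differs from $\assign{C}$ on exactly one variable, so for every $\tau\supseteq \assign{C}$ the assignments $\tau$ and $\tau\circ\sigma$ agree outside that single variable and $\flip(C,\sigma)\leq 1$. This contradicts the lower bound $\flip(C,\sigma)\geq 3$ forced by \cref{thm:flip-bound}, so \CLPR cannot derive any blocking literal from $F$. Since proving $\cost{F}\geq 1$ via \cref{thm:soundness} requires deriving at least one blocking literal, \CLPR cannot certify the cost of $F$ and is therefore incomplete. For \CBC, I would invoke the observation stated earlier that any \CBC clause is also a \CLPR clause, so a \CBC proof is a \CLPR proof and the same formula witnesses incompleteness.

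The only delicate point is confirming that deriving a blocking literal is genuinely necessary: here \cref{thm:soundness} tells us that to witness $\cost{F}\geq 1$ one must derive some $b_{i}$, so the inability to do so immediately yields incompleteness — no finer analysis of what \CLPR can or cannot derive in general is needed beyond the flip bound.
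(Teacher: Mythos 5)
Your proposal is correct and follows essentially the same route as the paper: the same formula $F=\{x\lor y\lor b_{1},\lnot x\lor b_{2},\lnot y\lor b_{3}\}$, verification of the two hypotheses of \cref{thm:flip-bound} for its three optimal assignments, the observation that a \CLPR witness flips exactly one variable so $\flip(C,\sigma)\leq 1$, and the reduction of \CBC to \CLPR via the earlier observation. Your explicit checks (pairwise Hamming distance $\geq 3$, both polarities of each $b_i$ occurring in $A$) only spell out details the paper leaves implicit.
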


\begin{cor}
  \label{thm:small-SR-incompleteness}
  There is a formula family \(F_{n}\) with \(O(n) \) variables,
  \(O(n)\) clauses and \(\cost{F_{n}}=\Omega(n)\) where, in order to
  prove \(\cost{F_{n}}\geq 1\), any \CSR proof derives
  a redundant clause with \(\flip(C,\sigma) = \Omega(n)\), where \(\sigma\) is
  the witnessing substitution for \(C\).
\end{cor}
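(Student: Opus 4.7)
The plan is to exhibit an explicit formula family $F_n$ that satisfies the two hypotheses of \cref{thm:flip-bound} with parameter $d=\Omega(n)$, and then invoke that theorem directly. The three-variable example appearing just above \cref{thm:CLPR-incompleteness} is already of this shape with $d=2$; the task is to amplify it to a linear gap between optimal assignments while keeping every blocking variable ``balanced'' across the optimal set.

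I would use $3n$ variables: primary variables $x_1,\dots,x_n$ together with blocking variables $b_1,\dots,b_n$ and $c_1,\dots,c_n$. The hard clauses of $F_n$ are the equivalences $x_1 \leftrightarrow x_i$, encoded as $\lnot x_1 \lor x_i$ and $x_1 \lor \lnot x_i$ for $i=2,\dots,n$, together with $\lnot x_i \lor b_i$ and $x_i \lor c_i$ for every $i\in[n]$. This gives a formula with $3n$ variables and $4n-2$ clauses. The equivalences force all the $x_i$'s to take a common Boolean value, so there are essentially two satisfying patterns: setting every $x_i$ to $\lfalse$ forces each $c_i=\ltrue$ and allows each $b_i=\lfalse$, while setting every $x_i$ to $\ltrue$ forces each $b_i=\ltrue$ and allows each $c_i=\lfalse$. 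Both patterns cost exactly $n$, so $\cost{F_n}=n=\Omega(n)$, and the set $A$ of optimal assignments has precisely two elements $\alpha_1,\alpha_2$ which disagree on every one of the $3n$ variables, giving pairwise Hamming distance $3n$. Moreover, each blocking variable, whether a $b_i$ or a $c_i$, takes value $\lfalse$ in one of $\alpha_1,\alpha_2$ and $\ltrue$ in the other.

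Both hypotheses of \cref{thm:flip-bound} are thus met with $d=3n$, and the theorem immediately yields that any \CSR proof of $\cost{F_n}\geq 1$ must at some step introduce a redundant clause $C$ whose witnessing substitution $\sigma$ satisfies $\flip(C,\sigma) \geq 3n = \Omega(n)$, as required. No significant obstacle is anticipated: the argument is essentially a design exercise, and the only subtlety is to balance the two ``sides'' of the cost (the $b_i$-clauses and the $c_i$-clauses) so that the two extreme satisfying configurations remain simultaneously optimal; without the $c_i$-clauses, the all-$\lfalse$ configuration would be strictly cheaper than the all-$\ltrue$ configuration, collapsing $A$ to a single assignment and nullifying the second hypothesis of \cref{thm:flip-bound}.
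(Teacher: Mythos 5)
Your proof is correct and follows essentially the same route as the paper: construct an explicit $F_n$ with exactly two optimal assignments at Hamming distance $\Omega(n)$ in which every blocking variable takes both values, then invoke \cref{thm:flip-bound}. The paper's concrete formula differs cosmetically (it uses two chains $x_0=\dots=x_n$, $y_0=\dots=y_n$ with $x_0\neq y_0$ and soft clauses $\lnot x_i\lor b_i$, $\lnot y_i\lor b_{i+n}$, rather than your single chain with symmetric penalties $\lnot x_i\lor b_i$ and $x_i\lor c_i$), but both satisfy the hypotheses of \cref{thm:flip-bound} with $d=\Omega(n)$ and yield the same conclusion.
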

\begin{proof}
  We define \(F_{n}\) on variables \(x_{0},x_{1},\ldots,x_{n}\), and
  variables \(y_{0},y_{1},\ldots,y_{n}\). The formula contains hard clauses
  to encode the constraint \(x_{0}\neq y_{0}\), and constraints
  \(x_{0}=x_{i}\), \(y_{0}=y_{i}\) for \(i \in [n]\).
  Furthermore \(F_{n}\) has the soft clauses, encoded as hard clauses
  with blocking variables, \(\neg x_{i} \lor b_{i}\) and
  \(\neg y_{i} \lor b_{i+n}\) for \(i \in [n]\).

  To satisfy the formula an assignment must either set all \(x\)'s to
  true and \(y\)'s to false, or vice versa. Both such assignments set
  to true \(n\) of the blocking variables, and no blocking variable is
  fixed to a constant value. Therefore the claim follows
  from~\cref{thm:flip-bound}.
\end{proof}

\begin{cor}
Any \CSPR proof for a formula respecting the hypothesis
of~\cref{thm:flip-bound} requires some redundant clauses of width at
least \(d\). 
\end{cor}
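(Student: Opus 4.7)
The plan is to reduce this statement to \cref{thm:flip-bound} by observing a simple structural fact about \CSPR substitutions: they can induce only a bounded amount of flipping, bounded precisely by the width of the clause they witness.

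First I would unpack the \CSPR case of \cref{def:CSR}: when \(C\) is \CSPR with witness \(\sigma\), the substitution \(\sigma\) is a partial assignment whose domain coincides with \(\dom(\assign{C}) = \Var(C)\), and \(\sigma\) merely flips some subset of the assignments made by \(\assign{C}\). Consequently, for every total assignment \(\tau \supseteq \assign{C}\), the composition \(\tau \circ \sigma\) agrees with \(\tau\) outside \(\Var(C)\) and can differ from \(\tau\) only on the (at most \(|C|\)) variables that \(\sigma\) flips. Hence
\[
\HD(\tau, \tau \circ \sigma) \leq |\Var(C)| = \mathrm{width}(C),
\]
which immediately yields \(\flip(C,\sigma) \leq \mathrm{width}(C)\) for any \CSPR step.

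Next I would invoke \cref{thm:flip-bound}: since a \CSPR proof is in particular a \CSR proof, any derivation of a blocking literal from a formula satisfying the hypotheses of that theorem must introduce at least one redundant clause \(C\) (with witness \(\sigma\)) such that \(\flip(C,\sigma) \geq d\). Combining this with the upper bound from the previous paragraph gives \(\mathrm{width}(C) \geq d\), which is exactly the statement of the corollary.

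I do not anticipate a genuine obstacle here; the whole argument is just the observation that \CSPR witnesses are width-limited in their flipping power, so the lower bound on \(\flip\) from \cref{thm:flip-bound} transfers directly to a width lower bound. The only minor care needed is to confirm that the hypothesis of \cref{thm:flip-bound} is applied to proving a single blocking literal (which is what the theorem requires) rather than to the full optimality certificate, but this is exactly the formulation used in the statement of the corollary.
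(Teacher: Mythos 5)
Your proposal is correct and follows essentially the same route as the paper: the paper's proof is exactly the two-line observation that any \CSPR clause $C$ with witness $\sigma$ satisfies $|C|\geq \flip(C,\sigma)$, combined with $\flip(C,\sigma)\geq d$ from \cref{thm:flip-bound}. Your version just spells out the width bound on $\flip$ in more detail.
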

\begin{proof}
Any redundant clause $C$ derived in \CSPR must have $|C|\geq \flip(C,\sigma)$, and by \cref{thm:flip-bound}, $\flip(C,\sigma)\geq d$. 
\end{proof}
For instance, the formula $F_n$ from \cref{thm:small-SR-incompleteness} requires \CSPR refutations of width~$\Omega(n)$.

\section{Short proofs using redundancy rules}
\label{sec:examples}

We show applications demonstarting the power of the redundancy rules on notable families of CNF formulas.
In~\cref{sec:min-unsat} we consider minimally unsatisfiable formulas, while in~\cref{sec:wphp} we consider the \emph{weak Pigeonhole Principle}.

\begin{rem}
Due to \cref{thm:soundness} and \cref{thm:SPR-completeness}, we refer to a \CSPR (resp. \CPR, \CSR) derivation from $\Gamma$ of  $b_{i_1},\dots, b_{i_k}$ for some distinct $k$
  blocking literals and all the $\lnot b_j$ for $j\notin\{i_1,\dots,i_k\}$, as a \emph{proof of $\cost{\Gamma}=k$} in \CSPR (resp. \CPR, \CSR).
   \end{rem}

\subsection{Short proofs of minimally unsatisfiable formulas}
\label{sec:min-unsat}

Recall the definition of \PR from \cite{HKB:NoNewVariables} (see also \cite[Definition 1.16]{BT.21}).
A \PR calculus refutation of a CNF formula $\Gamma$ is a sequence of clauses $D_1,\dots,D_t$ where $D_t=\bot$, and
each $D_{i+1}$ is either a clause in $\Gamma$, or derived by
resolution, or is \PR \wrt $\Gamma_i=\Gamma\cup \{D_1,\dots,D_{i}\}$,
that is $D_{i+1}$ satisfies
\[
  \Gamma_i {\upharpoonright}_{\assign{D_{i+1}}}\ \vdash_1 (\Gamma_i \cup \{D_{i+1}\})| _\sigma\ ,
\]
that is, \cref{item:CSRdef1} of \cref{def:CSR} for
a $\sigma$ which is a partial assignment. A \PR refutation is a \PR
derivation of $\bot$. The \emph{size} of a refutation is the number of clauses in it.

An unsatisfiable set~$\Gamma$ of clauses
is \emph{minimally unsatisfiable} if no proper subset of $\Gamma$ is unsatisfiable.

\begin{thm}\label{thm:PR}
  If a minimally unsatisfiable CNF formula $\{C_1,\dots,C_m\}$ has a \PR
  refutation of size $s$, then there is a $\CPR$ proof of
  $\cost{\{C_1\lor b_1,\dots, C_m\lor b_m\}}= 1$ of at most
  $\mathcal{O}(s + m)$ many clauses.
\end{thm}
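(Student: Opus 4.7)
The plan is to certify $\cost{\Gamma}=1$ (where $\Gamma=\{C_1\lor b_1,\dots,C_m\lor b_m\}$) by deriving the unit clauses $b_1$ and $\lnot b_2,\dots,\lnot b_m$; by \cref{thm:soundness} this establishes the optimum. The strategy is to first derive $\lnot b_j$ for $j\geq 2$ using a single global witness, and then lift the given \PR{} refutation of $\{C_1,\dots,C_m\}$ to a \CPR{} derivation of $b_1$ by attaching the literal $b_1$ to every simulated clause.

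For the first phase, minimal unsatisfiability provides a total assignment $\alpha$ satisfying $\{C_2,\dots,C_m\}$. Let $\sigma:=\alpha\cup\{b_1\mapsto 1\}\cup\{b_j\mapsto 0:j\geq 2\}$, which satisfies $\Gamma$ and has cost $1$. For each $j\geq 2$, I would derive $\lnot b_j$ in one \CPR{} step with witness $\sigma$: the cost condition holds because every $\tau\supseteq\{b_j\mapsto 1\}$ has $\cost{\tau}\geq 1=\cost{\tau\circ\sigma}$, and the redundancy condition is vacuous since $\sigma$ satisfies every clause of $\Gamma\cup\{\lnot b_j\}$. This uses $m-1$ steps.

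For the second phase, let $D_1,\dots,D_s$ with $D_s=\bot$ be the given \PR{} refutation. I would simulate it by deriving $D_k\lor b_1$ at step $k$. An input $C_1$ lifts to $C_1\lor b_1\in\Gamma$ for free; an input $C_j$ with $j\geq 2$ is obtained by resolving $C_j\lor b_j$ with $\lnot b_j$ (from Phase~1) and then weakening by $b_1$. A resolution step on $D_i,D_{i'}$ lifts to a resolution step on $D_i\lor b_1$ and $D_{i'}\lor b_1$. A \PR{} step with witness substitution $\sigma'$ (a partial assignment on CNF variables only) is simulated by a \CPR{} step with the same $\sigma'$; the cost condition is immediate since $\sigma'$ touches no blocking variable, so $\cost{\tau\circ\sigma'}=\cost{\tau}$. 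At the end, $D_s\lor b_1=b_1$ is derived, and the total length is $O(s+m)$.

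The main obstacle is the \CPR{} redundancy condition for the \PR{}-simulating step, namely $\Gamma^*_{k-1}{\upharpoonright}_{\assign{D_k\lor b_1}}\ \vdash_1 (\Gamma^*_{k-1}\cup\{D_k\lor b_1\}){\upharpoonright}_{\sigma'}$, where $\Gamma^*_{k-1}$ denotes the current \CPR{} database. The key observation is that restricting by $\assign{D_k\lor b_1}$ sets $b_1\mapsto 0$, so $C_1\lor b_1$ collapses to $C_1{\upharpoonright}_{\assign{D_k}}$ and each $D_i\lor b_1$ collapses to $D_i{\upharpoonright}_{\assign{D_k}}$; meanwhile the units $\lnot b_j$ from Phase~1 let unit propagation reduce each $C_j\lor b_j$ to $C_j{\upharpoonright}_{\assign{D_k}}$. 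Hence the \CPR{} left-hand side $\vdash_1$-derives the original \PR{} left-hand side $\Gamma_{k-1}{\upharpoonright}_{\assign{D_k}}$, and chaining with the original \PR{} derivation yields every clause of $(\Gamma_{k-1}\cup\{D_k\}){\upharpoonright}_{\sigma'}$; the extra blocking literals $b_1$ or $b_j$ on the \CPR{} right-hand side only weaken these clauses, which is harmless for $\vdash_1$.
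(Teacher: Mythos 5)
Your proposal is correct and rests on the same core idea as the paper's proof---simulate the \PR refutation clause by clause after attaching blocking literals, reusing the \PR witnesses unchanged on the original variables, and handle the unit clauses on the $b_j$'s separately---but the decomposition is different, essentially the paper's argument run in the opposite order. The paper lifts each $D_i$ to $D_i\lor B$ with $B=b_1\lor\dots\lor b_m$ the disjunction of \emph{all} blocking variables, which makes every lifted input clause a weakening of some $C_j\lor b_j$ and makes the redundancy check a syntactic identity (restricting by $\lnot B$ kills every blocking literal at once, so $\Gamma_i{\upharpoonright}_{\lnot{D_{i+1}}\wedge\lnot{B}}=F_i{\upharpoonright}_{\lnot{D_{i+1}}}$ exactly); it then shrinks $B$ to a single literal via the clauses $\lnot b_i\lor b_m$ and finally invokes \cref{lem:neg-bs} for the negative units. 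You instead derive the negative units $\lnot b_2,\dots,\lnot b_m$ \emph{first}, each in one \CPR step with a single total cost-$1$ satisfying assignment as global witness (the same device as in \cref{lem:neg-bs} and \cref{thm:SPR-completeness}, and valid here without first having $b_1$ because $\tau\supseteq\{b_j\mapsto 1\}$ already forces $\cost{\tau}\geq 1$), and then lift by the single literal $b_1$, using those units to strip $b_j$ from the inputs. The trade-off: the paper pays an extra $O(m)$ round of symmetry-style clauses to convert $B$ into a unit, while you pay by making the redundancy check for each simulated \PR step depend on unit propagation through the $\lnot b_j$'s rather than on a plain set identity. One point in your last paragraph deserves to be made explicit: the relation $\vdash_1$ is \emph{not} transitive in general, so ``LHS $\vdash_1 F_{k-1}{\upharpoonright}_{\assign{D_k}}$ and chain with the \PR condition'' is not a licit inference as literally stated. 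It does work here, but for the stronger reason you half-state: the clauses of $F_{k-1}{\upharpoonright}_{\assign{D_k}}$ are produced from $\Gamma^*_{k-1}{\upharpoonright}_{\assign{D_k\lor b_1}}$ by \emph{unconditional} unit propagation of the units $\lnot b_j$ already present (no negated target is assumed), so for each target clause $G$ a single unit-propagation run on $\Gamma^*_{k-1}{\upharpoonright}_{\assign{D_k\lor b_1}}{\upharpoonright}_{\lnot G}$ first performs these propagations and then continues with the original \PR propagation to reach $\bot$; also no blocking variable is ever set to $1$ by $\assign{D_k\lor b_1}$, so no clause of the lifted database disappears that was needed on the \PR side. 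With that justification spelled out, the argument is complete and yields the claimed $O(s+m)$ bound.
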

\begin{proof}
  Let \(F=\{C_1,\dots,C_m\}\) and
  $\Gamma=\{C_1\lor\penalty10000 b_1,\penalty1000 \dots,\penalty1000 C_m\lor\penalty10000 b_m\}$ 
  be the corresponding MaxSAT instance.
Let $\pi=(D_1,\dots,D_s)$ be a \PR refutation of $F$, so $D_s=\bot$. First we show that \[
	\pi_B=(D_1\lor B,\dots,D_s\lor B) \;,
  \]
  with $B=\biglor_{i\in [m]}b_i$, is a valid \CPR derivation of $B$
  from $\Gamma$. In particular, assuming we already derived the first
  \(i\) steps of \(\pi_{B}\), we show how to derive \(D_{i+1} \lor B \).

  When \(D_{i+i} \in F\), the clause \(D_{i+i} \lor B\) is the weakening
  of some clause in \(\Gamma\).
  If $D_{i+1}$ was derived using a resolution rule on some premises in
  \(\pi\), then \(D_{i+i} \lor B\) can be derived in the same way from
  the corresponding premises in \(\pi_{B}\).
  The remaining case is when $D_{i+1}$ is \PR \wrt   $F_{i} = F \cup \{D_{1},\ldots,D_{i}\}$. Let \(\alpha\) be the assignment that witnesses it.
  This assignment only maps variables from the original formula \(F\),
  so we extend it to
  $\alpha^\prime=\alpha \cup \{b_{1} \mapsto 0, \ldots, b_{m} \mapsto
  0\}$, and then use $\alpha^\prime$ to witness that indeed
  $D_{i+1}\lor B$ is \CPR \wrt $\Gamma_{i}=\Gamma \cup \{D_{1}\lor B, \ldots D_{i} \lor B \}$.
  For the cost condition in \cref{def:CSR}, just observe that any extension of
  $\alpha^\prime$ has cost \(0\).
  For the redundancy condition, observe that, by construction,
  \(\Gamma_{i}{\upharpoonright}_{\lnot{D_{i+1}}\wedge\lnot{B}} =
  F_{i}{\upharpoonright}_{\lnot{D_{i+1}}}\),
  \( (F_{i} \cup \{D_{i+1}\}){\upharpoonright}_{\alpha} = (\Gamma_{i} \cup \{D_{i+1} \lor B \}){\upharpoonright}_{\alpha^{\prime}} \), and $F_{i}{\upharpoonright}_{\lnot{D_{i+1}}}\vdash_1 (F_{i} \cup \{D_{i+1}\}){\upharpoonright}_{\alpha}$.

The last clause of $\pi_B$ is $B$.
  Let $\alpha_{opt}$ be an optimal assignment of \(\Gamma\). Since $F$
  is minimally unsatisfiable, $\cost{\alpha_{opt}}=1$. W.l.o.g.  assume $\alpha_{opt}$ sets \(b_{m}=1\) and all \(b_{i}=0\) for
  \(i<m\).

  Now, for each $i<m$, the clause $E_{i}=\lnot b_i\lor b_m$ is \CPR \wrt
  $\pi_B\cup\{E_j : j<i\}$, using $\alpha_{opt}$ itself as the
  witnessing assignment: redundancy holds since $\alpha_{opt}$
  satisfies every clause in $\pi_B$ and all clauses \(E_{j}\).
  The cost condition follow since $\cost{\tau}\geq 1$ for any
  $\tau\supseteq \assign{E_i}$ and
  $\cost{\tau\circ\alpha_{opt}}=\cost{\alpha_{opt}}=1$.

  In the end we use \(O(m)\) steps to derive \(b_{m}\) from $B$ and
  $E_1,\dots, E_{m-1}$, and to derive in \CPR calculus all the units
  $\lnot b_1, \dots, \lnot b_{m-1}$ via \cref{lem:neg-bs}.
\end{proof}

\cref{thm:PR} shows that the propositional refutations for the minimally
unsatisfiable formulas in \cite{BT.21} translate immediately to
certificates in the MaxSAT. In particular, as a corollary of
\cref{thm:PR}, we have that \CPR proves in
polynomial size that
\begin{itemize}
	\item the \emph{Pigeonhole Principle} with $n+1$ pigeons and $n$ holes \cite[Theorem 4.3]{BT.21} and \cite[Section~5]{HKB.19},
	\item the \emph{Bit-Pigeonhole Principle} \cite[Theorem 4.4]{BT.21},
	\item the \emph{Parity Principle} \cite[Theorem 4.6]{BT.21},
	\item the \emph{Tseitin Principle} on a connected graph \cite[Theorem 4.10]{BT.21},
\end{itemize}
have all cost 1, since they are all minimally unsatisfiable. In MaxSAT
resolution that would require exponentially long derivations.

\subsection{Short proofs of the minimum cost of $\PHP{m}{n}$}
\label{sec:wphp}
Let $m > n \geq 1$. The pigeonhole principle from \(m\) pigeons to \(n\) holes, with
blocking variables, has the following formulation, that we call \(\BPHP{m}{n}\): the \emph{totality} clauses
$ \bigvee_{j\in [n]}p_{i,j}  \lor b_i $
   for $i \in [m]$, and the \emph{injectivity} clauses
$ \overline p_{i,j} \lor \overline p_{k,j} \lor b_{i,k,j}$
   for $1\le i < k \le m$ and $j\in [n]$.
We use $b_{k,i,j}$ as an alias of the variable $b_{i,k,j}$, given that \(i<k\).

\begin{thm}\label{thm:PHP-mn}
	\CSR proves $\cost{\BPHP{m}{n}}= m-n$ in polynomial size.
\end{thm}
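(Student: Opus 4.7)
The plan is to construct a polynomial-size $\CSR$ derivation from $\BPHP{m}{n}$ of a set of unit clauses certifying the optimal cost: by \cref{thm:soundness} it is enough to derive $m-n$ positive blocking units (say $b_{n+1},\dots,b_m$) and the negations $\overline{b_j}$ of all remaining blocking variables. Thanks to \cref{lem:neg-bs}, once the $m-n$ positive units are in the database, all negations follow in a polynomial number of steps, so the real task is to produce the $m-n$ positive blocking literals.

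The derivation proceeds by induction on $n$, reducing $\BPHP{m}{n}$ to $\BPHP{m-1}{n-1}$ by ``using up'' pigeon $1$ and hole $1$. Concretely, the inductive step derives $p_{1,1}$ together with the auxiliary units $\overline{p_{1,j}}$ for $j>1$, $\overline{p_{k,1}}$ for $k>1$, $\overline{b_1}$, and $\overline{b_{1,k,1}}$ for $k>1$. Resolving each totality clause $\bigvee_{j}p_{k,j}\lor b_k$ for $k>1$ against $\overline{p_{k,1}}$ yields $\bigvee_{j>1}p_{k,j}\lor b_k$, while the injectivity clauses for holes $j>1$ are unchanged. The remaining clauses on $\{p_{k,j}:k,j>1\}\cup\{b_k:k>1\}\cup\{b_{i,k,j}:1<i<k,\,j>1\}$ form a copy of $\BPHP{m-1}{n-1}$, and the inductive hypothesis finishes, supplying $b_{n+1},\dots,b_m$ together with the needed negations. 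The base case $m=n+1$ is handled via \cref{thm:PR}, since the underlying $\PHP{n+1}{n}$ is minimally unsatisfiable and admits polynomial-size $\PR$ refutations \cite[Theorem 4.3]{BT.21}, \cite[Section 5]{HKB.19}.

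The crux is the $\CSR$-derivation of $p_{1,1}$. The witness substitution $\sigma$ must satisfy both the unit-propagation redundancy condition and the cost condition of \cref{def:CSR}. Since the latter is required for all total $\tau\supseteq\overline{p_{1,1}}$, including those setting every blocker to $\lfalse$, $\sigma$ is not allowed to set any blocking variable to $\ltrue$. We adapt the symmetry-based $\SR$-substitutions of the classical pigeonhole proofs \cite[Theorem 4.3]{BT.21},\cite[Section 5]{HKB.19}: $\sigma$ sends $p_{1,1}\mapsto\ltrue$, swaps the row for pigeon $1$ with the column for hole $1$ via the literal assignments $\sigma(p_{1,j})=p_{j,1}$ and $\sigma(p_{j,1})=p_{1,j}$ for $1<j\le n$, sets $b_1\mapsto\lfalse$ and $b_{1,k,1}\mapsto\lfalse$ for $k>1$, and leaves everything else fixed. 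The cost condition is then immediate, since the only blockers in the domain of $\sigma$ are mapped to $\lfalse$; the redundancy condition is verified by unit propagation through the restricted totality clause $\bigvee_{j>1}p_{1,j}\lor b_1$ combined with the remaining injectivity clauses. The auxiliary units $\overline{b_1}$, $\overline{b_{1,k,1}}$, $\overline{p_{k,1}}$, $\overline{p_{1,j}}$ are then obtained by a few short $\CSR$-steps using the canonical optimal assignment $\alpha^{\star}$ (with $\alpha^{\star}(p_{i,i})=\ltrue$ for $i\le n$ and $\alpha^{\star}(b_i)=\ltrue$ for $i>n$) as witness, in the style of \cref{lem:neg-bs}, together with resolution against the original injectivity clauses.

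The main obstacle is validating the witness $\sigma$ for the $p_{1,1}$ step: it must simultaneously satisfy the cost condition (no pigeon-blocker $b_k$ set to $\ltrue$) and route enough unit propagation through the transformed totality and injectivity clauses in the image $(\Gamma\cup\{p_{1,1}\}){\upharpoonright}_{\sigma}$. This interplay between blocking variables and the symmetry-breaking substitution is the main new ingredient over the classical $\PR/\SR$ proof of $\PHP{n+1}{n}$. Once the crux step is verified, the inductive accounting adds a polynomial number of clauses per level, yielding a $\CSR$ proof of total polynomial size.
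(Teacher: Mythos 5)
Your high-level skeleton (induction removing one pigeon and one hole per round, \cref{lem:neg-bs} for the negative units at the end) matches the paper's, but the crux step as you describe it does not verify, and this is a genuine gap rather than a detail. Concretely, the unit clause $p_{1,1}$ is \emph{not} \CSR with respect to the raw clause set $\BPHP{m}{n}$ under your row--column-swap witness $\sigma$: take the hole-$1$ injectivity clause $\lnot{p_{1,1}}\lor\lnot{p_{k,1}}\lor b_{1,k,1}$ for $2\le k\le n$; its image under $\sigma$ is the unit $\lnot{p_{1,k}}$, so the redundancy condition demands $\BPHP{m}{n}{\upharpoonright}_{\{p_{1,1}\mapsto 0,\,p_{1,k}\mapsto 1\}}\vdash_1\bot$, but that restricted formula contains no unit clause at all and unit propagation derives nothing. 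The images $T_k{\upharpoonright}_\sigma$ of the totality clauses fail similarly: after propagating the single unit $p_{k,1}$ one is stuck on binary injectivity clauses. This is precisely the obstruction the paper's proof is built to avoid: it first derives the \emph{conditional} symmetry-breaking clauses $\lnot{b_m}\lor b_i$ and $\lnot{p_{m,j}}\lor p_{m,n}$ via the permutation criterion of \cref{lem:CSR-symm} (where the redundancy check is an inclusion of restricted clause sets, not a unit-propagation computation), then obtains the units $\lnot{p_{k,n}}$ by plain resolution, and only \emph{then} asserts $p_{m,n}$ --- at which point the sole surviving clause containing $\lnot{p_{m,n}}$ is $\lnot{p_{m,n}}\lor\lnot{b_m}$ and the trivial witness $\{p_{m,n}\mapsto 1, b_m\mapsto 0\}$ checks immediately.

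A second, independent failure is your derivation of the auxiliary units $\lnot{p_{k,1}}$, $\lnot{p_{1,j}}$, $\lnot{b_1}$ ``in the style of \cref{lem:neg-bs}'' with the canonical optimal assignment $\alpha^{\star}$ as witness. The cost condition in \cref{def:CSR} quantifies over \emph{all} total $\tau\supseteq\assign{C}$, not only those satisfying the formula. Since a clause such as $\lnot{p_{k,1}}$ contains no blocking literal, there is a total $\tau\supseteq\assign{C}$ with $\cost{\tau}=0$, whereas $\cost{\tau\circ\alpha^{\star}}=m-n>0$, so the condition fails. \Cref{lem:neg-bs} only works because the clauses $C_j$ derived there contain $k+1$ negated blocking literals, forcing $\cost{\tau}\ge k+1$ for every $\tau\supseteq\assign{C_j}$. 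Finally, note that the injectivity blockers $b_{i,k,j}$ need to be eliminated globally (the paper's Preprocessing~2 derives all units $\lnot{b_{i,k,j}}$ by \CPR with a witness that unassigns one of the two colliding pigeons); handling only $b_{1,k,1}$ inside the inductive step leaves the remaining collision blockers interfering with later redundancy checks.
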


This is the main result of the section. Before proving it we show two useful
lemmas. The first lemma is used to ``clean up'' the set of clauses during
a derivation. For each new step in a \CSR calculus derivation the
redundancy condition must be checked against an ever increasing set
of clauses. It turns out that some already derived clauses can be completely ignored for the rest of the derivation under some technical  conditions. This makes up for the lack of a deletion rule, that we do not have, and in the context of SAT seems to give more power to the systems \cite{BT.21}.

\begin{lem}\label{lmm:cleanup}
  Let \(\Gamma\) and \(\Sigma\) be two sets of clauses.
  Any \CSR derivation \(D_{1} \ldots, D_{t}\) from \(\Gamma\) is also
  a valid derivation from \(\Gamma \cup \Sigma\) if either of the two
  cases applies
  \begin{enumerate}
    \item Variables in \(\Sigma\) do not occur in \(\Gamma \cup\{D_{1},\ldots,D_{t}\}\).
    \item For every clause \(C \in \Sigma\) there is a clause
    \(C' \in \Gamma\) so that \(C' \subseteq C\).
  \end{enumerate}
\end{lem}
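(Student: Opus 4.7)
The plan is to show that each step of the given derivation $D_1,\ldots,D_t$ remains a valid step when the database is extended by $\Sigma$, reusing at every \CSR step the same witnessing substitution $\sigma$ used originally. Resolution steps carry over unchanged since their premises are still present, and the variable condition $\Var(D_i)\subseteq\Var(\Gamma)\subseteq\Var(\Gamma\cup\Sigma)$ is inherited. For the cost condition of a \CSR step, the set of blocking variables and $\sigma$ are unchanged, so $\cost{\tau\circ\sigma}\leq\cost{\tau}$ for every $\tau\supseteq\assign{D_i}$ exactly as before; in Case~1 we first modify $\sigma$ so that it acts as the identity on every variable in $\Var(\Sigma)$, which is harmless because those variables are not blocking variables of $\Gamma$.

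The work is in verifying the redundancy condition
\[
  (\Gamma_{i-1}\cup \Sigma){\upharpoonright}_{\assign{D_{i}}}\ \vdash_1\ (\Gamma_{i-1}\cup \Sigma\cup \{D_{i}\}){\upharpoonright}_\sigma,
\]
where $\Gamma_{i-1}=\Gamma\cup\{D_{1},\ldots,D_{i-1}\}$. Each clause on the RHS that comes from $\Gamma_{i-1}\cup\{D_i\}$ is already unit-propagation derivable from $\Gamma_{i-1}{\upharpoonright}_{\assign{D_i}}$ by the original derivation, and this remains so after enlarging the LHS. The remaining task is to derive each clause in $\Sigma{\upharpoonright}_\sigma$. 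In Case~1, $\Var(\Sigma)$ is disjoint from $\Var(\Gamma\cup\{D_1,\ldots,D_t\})$ and $\sigma$ was chosen to act as the identity on $\Var(\Sigma)$, so $\Sigma{\upharpoonright}_{\assign{D_i}}=\Sigma=\Sigma{\upharpoonright}_\sigma$; each such clause is literally present in the LHS, hence trivially derivable.

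In Case~2, fix $C\in\Sigma$ and pick $C'\in\Gamma\subseteq\Gamma_{i-1}$ with $C'\subseteq C$. If $\sigma$ satisfies $C$ there is nothing to derive. Otherwise $\sigma$ satisfies neither $C$ nor $C'$, so $C'{\upharpoonright}_\sigma\subseteq C{\upharpoonright}_\sigma$ are both non-tautological. Since $C'\in\Gamma_{i-1}$, the clause $C'{\upharpoonright}_\sigma$ appears on the RHS of the \emph{original} step and is therefore unit-propagation derivable from $\Gamma_{i-1}{\upharpoonright}_{\assign{D_i}}$. Because $\lnot(C{\upharpoonright}_\sigma)$ differs from $\lnot(C'{\upharpoonright}_\sigma)$ only by fixing some additional variables, \cref{obs:unit-prop-restriction} converts the witnessing unit-propagation derivation into one that establishes $\vdash_1 C{\upharpoonright}_\sigma$ from the enlarged LHS. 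The main obstacle is precisely this interaction between $\sigma$ and the new clauses of $\Sigma$: Case~1 needs the mild WLOG modification of $\sigma$ on fresh variables, while Case~2 relies on weakening commuting with substitution together with unit propagation being preserved under further restriction.
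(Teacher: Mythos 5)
Your proof is correct and follows essentially the same route as the paper's: the cost condition is unaffected by enlarging the clause database, Case~1 is handled by disjointness of variables, and Case~2 by subsumption of the restricted clauses. You make explicit two details the paper leaves implicit --- normalizing \(\sigma\) to act as the identity on \(\Var(\Sigma)\) in Case~1, and the fact that \(\vdash_1\) is preserved under weakening of the derived clause (via \cref{obs:unit-prop-restriction}) in Case~2 --- but these are elaborations of the same argument, not a different one.
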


\begin{proof}
  The cost condition does not depend on the set of clauses, therefore
  we only need to check the validity of the redundancy condition.
  In the first case, the redundancy condition applies because the clauses of \(\Sigma\) are unaffected by
  the substitutions involved.

  For the second case, consider the derivation of a clause \(D_{i}\)
  witnessed by \(\sigma_{i}\). The clauses in \(\Sigma{\upharpoonright}_{\lnot{D_{i}}}\) and
  \(\Sigma{\upharpoonright}_{\sigma_{i}}\) are subsumed by clauses in
  \(\Gamma{\upharpoonright}_{\lnot{D_{i}}}\) and \(\Gamma{\upharpoonright}_{\sigma_{i}}\), respectively. Hence
  \[
    (\Gamma\cup\{D_{1},\ldots,D_{i-1}\}){\upharpoonright}_{\lnot{D_{i}}} \vdash_{1}
    (\Gamma\cup\Sigma\cup\{D_{1},\ldots,D_{i}\}){\upharpoonright}_{\sigma_{i}}
  \]
  which implies the validity of the redundancy condition.
\end{proof}

The second lemma is used as a general condition to enforce clauses to be \CSR.

\begin{lem}\label{lem:CSR-symm}
	Let $C$ be a clause and $\Gamma$ a set of clauses. If there exists a permutation $\pi$ such that
	\begin{enumerate}
		\item \label{item:LemSym1}$\pi$ maps the set of blocking variables to itself,
		\item \label{item:LemSym2}the substitution $\assign{C}\circ \pi$ satisfies $C$, and
 \label{item:LemSym3}
	$\Gamma{\upharpoonright}_{\assign{C}} \supseteq \Gamma{\upharpoonright}_{\assign{C}\,\circ\, \pi}$,
	\end{enumerate}
	then $C$ is \CSR w.r.t. $\Gamma$. Notice that the second condition in item \eqref{item:LemSym3} 
    is automatically satisfied if $\pi$ is a symmetry of $\Gamma$, i.e. $\Gamma = \Gamma{\upharpoonright}_\pi$.
\end{lem}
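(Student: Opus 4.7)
The plan is to take $\sigma = \assign{C} \circ \pi$ as the candidate witness substitution for the \CSR property, and then verify the two conditions of \cref{def:CSR} in turn against the three hypotheses.

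For the redundancy condition, hypothesis 2 gives $C{\upharpoonright}_\sigma = \ltrue$, so $C$ simplifies away and the condition reduces to $\Gamma{\upharpoonright}_{\assign{C}} \vdash_1 \Gamma{\upharpoonright}_{\assign{C} \circ \pi}$. But hypothesis 3 is exactly the multiset inclusion $\Gamma{\upharpoonright}_{\assign{C}} \supseteq \Gamma{\upharpoonright}_{\assign{C}\circ\pi}$, and $\vdash_1$ is trivial whenever every clause on the right already appears on the left.

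For the cost condition, fix a total assignment $\tau \supseteq \assign{C}$ and expand
\[
\cost{\tau \circ \sigma} \;=\; \sum_{i=1}^{m} \tau\bigl(\assign{C}(\pi(b_i))\bigr).
\]
By hypothesis 1, $\pi$ restricts to a bijection on the blocking variables, so reindexing gives $\cost{\tau\circ\sigma} = \sum_{b'} \tau(\assign{C}(b'))$ with $b'$ ranging over blocking variables. I then argue pointwise that the summand equals $\tau(b')$ in every case: if $b'$ does not occur in $C$ then $\assign{C}(b') = b'$ and the claim is immediate; if $b'$ (resp.\ $\lnot b'$) occurs in $C$ then $\assign{C}$ fixes $b'$ to $\lfalse$ (resp.\ $\ltrue$), and since $\tau$ extends $\assign{C}$ the value of $\tau(b')$ is forced to be the same. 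Summing yields $\cost{\tau\circ\sigma} = \cost{\tau}$, which is actually stronger than the inequality required.

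Neither step is genuinely hard; the main subtlety is the bookkeeping in the cost step, namely confirming that the values $\assign{C}$ forces on the blocking literals appearing in $C$ agree with the corresponding values taken by $\tau$, and then using hypothesis 1 to reindex cleanly. For the final remark, if $\pi$ is a symmetry of $\Gamma$, i.e.\ $\Gamma = \Gamma{\upharpoonright}_\pi$ as multisets, then the identity $\Gamma{\upharpoonright}_{\assign{C}\circ\pi} = (\Gamma{\upharpoonright}_\pi){\upharpoonright}_{\assign{C}}$ (from the definition of substitution composition) immediately yields $\Gamma{\upharpoonright}_{\assign{C}\circ\pi} = \Gamma{\upharpoonright}_{\assign{C}}$, so hypothesis 3 is satisfied automatically.
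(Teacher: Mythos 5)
Your proposal is correct and follows exactly the route of the paper's (much terser) proof: the same witness $\sigma=\assign{C}\circ\pi$, with the cost condition coming from hypothesis 1, the satisfaction of $C$ from hypothesis 2, and the unit-propagation check discharged by the inclusion in hypothesis 3 since $\Gamma\supseteq\Delta$ implies $\Gamma\vdash_1\Delta$. You simply spell out the reindexing and the pointwise agreement of $\tau(\assign{C}(b'))$ with $\tau(b')$, which the paper leaves implicit.
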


\begin{proof}
The cost condition follows
  from \cref{item:LemSym1}.
  The redundancy condition is immediate by
  \cref{item:LemSym2} using as $\sigma$
  the substitution $\overline C\circ \pi$.
\end{proof}

To satisfy
\(\Gamma\) with a clause \(C \lor b\), where \(b\) is the
corresponding blocking variable, \(b\) must be true whenever \(C\) is
false. To minimize cost, though, it makes sense to set \(b\) to
false whenever \(C\) is satisfied. Namely, to have
\(b \leftrightarrow \lnot{C}\). This does not follow logically from
\(\Gamma\), but can be derived in \(\CLPR\).

\begin{lem}\label{lem:extension_vars}
  Let \(\Gamma\) contain a clause $\{C \lor b\}$, so that \(b\) is
  a blocking variables and that clause is its unique occurrence in
  \(\Gamma\).
  It is possible to introduce all the clauses of the form
  $\lnot \ell\lor \lnot b$, for every literal $\ell$ in $C$, using the \CLPR rule.
  That is, we can turn $b$ into a full-fledged extension variable such that $b \leftrightarrow \lnot C$.
\end{lem}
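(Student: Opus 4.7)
The plan is to derive each of the clauses $D_\ell := \lnot\ell \lor \lnot b$, one at a time as $\ell$ ranges over the literals of $C$, using the \CLPR rule with witness substitution $\sigma_\ell$ obtained from $\assign{D_\ell}$ by flipping $b$. Concretely, both $\assign{D_\ell}$ and $\sigma_\ell$ have domain $\{\Var(\ell),b\}$ and both send $\ell$ to $\ltrue$, but $\sigma_\ell(b)=\lfalse$ while $\assign{D_\ell}(b)=\ltrue$. This matches the \CLPR format (same domain, differing on a single variable), so it remains only to verify the cost condition and the redundancy condition of \cref{def:CSR}.

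The cost condition is immediate: since $b$ is a blocking variable and $\sigma_\ell$ turns $b$ off while leaving $\Var(\ell)$ at the value $\tau$ already assigned to it, composing with $\sigma_\ell$ drops the cost by exactly one on every total $\tau\supseteq\assign{D_\ell}$. For the redundancy condition I would argue the stronger set-theoretic inclusion $(\Gamma \cup \{D_\ell\}){\upharpoonright}_{\sigma_\ell}\subseteq \Gamma{\upharpoonright}_{\assign{D_\ell}}$, from which $\vdash_1$ follows trivially. This uses the hypothesis that $b$'s only occurrence in $\Gamma$ is in $C\lor b$: that clause is satisfied by $\ell$ under $\sigma_\ell$ and by $b$ under $\assign{D_\ell}$, so it contributes nothing to either restriction; every other clause of $\Gamma$ is free of $b$, and since $\sigma_\ell$ and $\assign{D_\ell}$ agree on $\Var(\ell)$, these clauses receive identical restrictions on both sides; finally, $D_\ell$ itself is satisfied by $\lnot b$ under $\sigma_\ell$.

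The one mild subtlety to track is that the argument must still go through at the later iterations, where the current database also contains the previously derived clauses $\lnot\ell_i \lor \lnot b$ and so $b$ is no longer uniquely occurring. However, $\sigma_{\ell}$ sets $b$ to $\lfalse$, hence satisfies every such $\lnot\ell_i\lor\lnot b$ via its $\lnot b$ literal. These clauses therefore drop out of the right-hand side of the inclusion, while on the left-hand side they only add (possibly) extra information, so the inclusion argument is preserved. Iterating over all literals of $C$ produces every clause $\lnot\ell\lor\lnot b$; together with the original $C\lor b$ these are exactly the clausal form of $b\liff\lnot C$, making $b$ a genuine extension variable as claimed.
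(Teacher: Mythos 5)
Your proposal is correct and follows essentially the same route as the paper: the same witness $\sigma_\ell=\{\ell\mapsto 1, b\mapsto 0\}$, the same observation that the cost drops because only $b$ is flipped, and the same strengthening of the redundancy condition to a set-theoretic inclusion, with the uniqueness of $b$'s occurrence (and the $\lnot b$ literal in the previously derived clauses) guaranteeing that every clause mentioning $b$ is satisfied by $\sigma_\ell$. Nothing further is needed.
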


\begin{proof}
  Let $\Gamma$ be as in the statement, let
  \(C=\ell_{1} \lor \dots\lor \ell_{t}\). We need to show we can
  derive clause $D_{i}=\lnot \ell_{i} \lor \lnot b$ for every
  $1 \leq i \leq t$.
  Assume that we derived \(D_{1},\ldots,D_{i-1}\), we derive \(D_{i}\)
  in \CLPR using witnessing assignment
  $\sigma_i:=\{\ell_{i} \mapsto 1, b \mapsto \lfalse\}$.
  Since  \(\lnot{D_{i}}=\{\ell_{i} \mapsto 1, b \mapsto 1\}\), the cost
  condition (item~\ref{item:CSRdef2} of Definition~\ref{def:CSR})
  is satisfied.
  To show the redundancy condition (item~\ref{item:CSRdef1} of
  Definition~\ref{def:CSR}) we prove something stronger, that is
  \begin{equation*}
    \Gamma{\upharpoonright}_{\lnot{D_{i}}} \supseteq (\Gamma\cup \{D_{1}, \ldots, D_{i}\}){\upharpoonright}_{\sigma_i}\;.
  \end{equation*}
  Indeed, on clauses of $\Gamma$ that do not contain the variable $b$,
  the assignments $\lnot{D_{i}}$ and $\sigma_i$ behave
  identically, while all the clauses containing \(b\)  are satisfied by \(\sigma_i\) since those clauses are \(\{C \lor b\}\) and
  \(D_{1}, \ldots, D_{i}\).
\end{proof}

\begin{proof}[Proof of \cref{thm:PHP-mn}]
  The proof is by induction. The goal is to reduce the formula to \(m-1\) pigeons
  and \(n-1\) holes. First we do some preprocessing: from \(\BPHP{m}{n}\) we derive a slightly more structured formula
  \(\FF{m}{n}\). Then we show how to derive \(\FF{m-1}{n-1}\) in
  a polynomial number of steps. The results follows because after
  \(n\) such derivations we obtain the formula \(\FF{m-n}{0}\) that contains the
clauses \(b_{1},\dots,b_{m-n}\). Moreover, we also derive  \(\lnot{b_{m-n+1}}, \dots, \lnot{b_m}\) along the way.

  We derive \(\FF{m-1}{n-1}\)  from \(\FF{m}{n}\) using the rules of \CSR calculus. We divide the
  argument into several steps, but first we show how to derive $\FF{m}{n}$ from $\BPHP{m}{n}$.

\newcounter{step}
\setcounter{step}{1}
\newcommand{\STEP}[2][]{\smallskip
{\textbf{Step \arabic{step}#1.}\addtocounter{step}{1}\itshape``#2''.\ }}

\textbf{Preprocessing 1.}{\itshape ``Make $b_i$ full-fledged extension
  variables''.} 
 Using \cref{lem:extension_vars}, turn all the variables~$b_i$ into
full-fledged extension variables that satisfy
$b_i \liff \neg ( p_{i,1} \lor \cdots \lor p_{i, n} )$, by adding the
clauses
 \[
 \ExtVars=\{\lnot{p_{i,j}} \lor \lnot{b_i} : i\in[m], j\in[n]\}
 \] one by one in \CLPR.
 The current database of clauses is $\BPHP{m}{n}\cup \ExtVars$.

\textbf{Preprocessing 2.}{\itshape ``Enforce injectivity''.\ }
Optimal assignments for
\(\BPHP{m}{n}\) can have unassigned pigeons or have collisions between
pigeons.
It is more convenient to avoid collisions and just focus on
assignments that are partial matching. A moment's thought suffices to
realize that such restriction does not change the optimal cost but
simplifies the solution space.
We enforce collisions to never occur by
deriving all the unit clauses \(\lnot{b_{i,k,j}}\) by \CPR.\@ 

These clauses can be
derived in any particular order: to show that \(\lnot{b_{i,k,j}}\) is
\CSR \wrt $\Gamma_0$ and the previously derived $\lnot{b_{i',k',j'}}$ we pick one of the two pigeons involved (say \(k\)) and use
\( \sigma=\{b_{i,k,j}= 0, b_{k}= 1, p_{k,1}=\dots=p_{k,n}=0\}\) as the
witnessing assignment.
The cost is not increased, and to check the redundancy condition
observe that \(\sigma\) satisfies all the clauses that touches, so on
the right side of the redundancy condition has a subset of
\(\BPHP{m}{n}\cup \ExtVars\) with no occurrences \(b_{i,k,j}\), while
the left side has the same set of clauses, but restricted with
\(b_{i,k,j}=0\).

Now that we have all clauses \(\lnot{b_{i,k,j}}\) we resolve them with
the corresponding clauses
$ \lnot{p_{i,j}} \lor \lnot{p_{k,j}} \lor b_{i,k,j}$ to get the set of
clauses
\[
\Inj=\{\lnot{p_{i,j}} \lor \lnot{p_{k,j}} : 1\leq i < k \leq m \text{ and }j\in [n]\}\ ,
\]
for all holes
\(j\) and pair of pigeons \(i\) and \(k\).

We do not need variables \(b_{i,k,j}\) anymore. By one application of
\cref{lmm:cleanup}, from now on we can ignore all clauses
$ \lnot{p_{i,j}} \lor \lnot{p_{k,j}} \lor b_{i,k,j}$. By another
application, we can also ignore the clauses \(\lnot{b_{i,k,j}}\).
We will do induction on the current database of clauses. 

For clarity we list all
its clauses again.
\begin{center}
  Formula \(\FF{m}{n}\quad\)
\begin{tabular}{ll}
$ \bigvee_{j\in [n]}p_{i,j}  \lor b_i $
   & for $i \in [m]$ \hfill(\emph{totality 1}),\\[1ex]
$  \lnot{p_{i,j}}  \lor \lnot{b_i} $
   & for $i \in [m]$ and $j\in [n]$\hfill(\emph{totality 2}),\\[1ex]
$ \lnot{p_{i,j}} \lor \lnot{p_{k,j}}$
   & for $1\le i < k \le m$ and $j\in [n]$\quad \hfill(\emph{injectivity}).
\end{tabular}
\end{center}

The core idea of the induction is that if a pigeon flies to a hole, we
can assume without loss of generality that it is pigeon \(m\) that
flies into hole \(n\).

\STEP{If some pigeon $i$ flies, we can assume it is pigeon $m$ who flies}
We want to derive, in this order, the set of clauses
\[
\BB=\{\lnot{b_{m}} \lor b_{1}, \lnot{b_{m}} \lor b_{2},
\ldots,
\lnot{b_{m}} \lor b_{(m-1)} \}
\]
from \(\FF{m}{n}\), to claim that if some pigeon is mapped, then pigeon
\(m\) is mapped too.
For each \(C_{i} = \lnot{b_{m}} \lor b_{i}\) we apply \cref{lem:CSR-symm}
using  as the witnessing permutation $\pi_{i}$, the permutation that swaps pigeons $m$ and $i$.

Namely,
\(\pi_i(p_{m,j}) = p_{i,j}\),
\(\pi_i(p_{i,j}) = p_{m,j}\),
\(\pi_i(b_m) = b_i\),
\(\pi_i(b_i) = b_m\),
and \(\pi_{i}\) is the identity on all other variables, therefore $\pi_i$ satisfies the first requirement for the lemma. Likewise  \(\lnot{C_{i}} \circ \pi_{i} \models C_{i} \), and we need to check that
\begin{equation*}
  (\FF{m}{n} \cup \{C_{1},\ldots,C_{(i-1)} \}){\upharpoonright}_{\assign{C_{i}}}
  \supseteq
  (\FF{m}{n} \cup \{C_{1},\ldots,C_{(i-1)} \}){\upharpoonright}_{\assign{C_{i}}\,\circ\,\pi_{i}} \;.
\end{equation*}
By symmetry
\(\FF{m}{n}{\upharpoonright}_{\lnot{C_{i}}} =
\FF{m}{n}{\upharpoonright}_{\lnot{C_{i}}\,\circ\,\pi_{i}}\), and for \(1 \leq i' < i\),
\(C_{i'}{\upharpoonright}_{\lnot{C_{i}}\,\circ\,\pi_{i}}=1\), hence the inclusion is true.
The current database of clauses is $\Gamma_1=\FF{m}{n}\cup \BB$.

\STEP{If pigeon $m$ flies to some hole, we can assume it flies to hole
  $n$}
Using \CSR inferences, we derive from \(\Gamma_1\), in this order, the clauses
\[
\PP=\{\lnot{p_{m,1}} \lor p_{m,n}, \lnot{p_{m,2}} \lor p_{m,n}, \ldots,
\lnot{p_{m,(n-1)}} \lor p_{m,n} \} \]
expressing that if
pigeon \(m\) flies to some hole, this hole is the last one.

For each \(C_{j} = \lnot{p_{m,j}} \lor p_{m,n}\) we apply \cref{lem:CSR-symm}
with the witnessing permutation $\pi_{j}$~swapping holes~$n$ and~$j$. 

Namely \(\pi_j(p_{i,n}) = p_{i,j}\) and \(\pi_j(p_{i,j}) = p_{i,n}\),
and \(\pi_{j}\) is the identity on all other variables.
By construction \(\pi_{j}\) satisfies the first requirement for the
lemma, and likewise \(\lnot{C_{j}} \circ \pi_{j} \models C_{j} \), and, again, we need to check
\begin{equation*}
  (\Gamma_{1} \cup \{C_{1},\ldots,C_{(j-1)} \}){\upharpoonright}_{\assign{C_{j}}}
  \supseteq
  (\Gamma_{1} \cup \{C_{1},\ldots,C_{(j-1)} \}){\upharpoonright}_{\assign{C_{j}}\,\circ\,\pi_{j}} \;.
\end{equation*}
By symmetry
\(\Gamma_{1}{\upharpoonright}_{\lnot{C_{j}}} =
\Gamma_{1}{\upharpoonright}_{\lnot{C_{j}}\,\circ\,\pi_{j}}\), and for \(1 \leq j' < j\),
\(C_{j'}{\upharpoonright}_{\lnot{C_{j}}\,\circ\,\pi_{j}}=1\), hence the inclusion is true.
The current database of clauses is $\Gamma_2=\Gamma_1\cup \PP= \FF{m}{n} \cup \BB\cup \PP$.

\STEP{Obtain \(\lnot{p_{k,n}}\) for every \(1 \leq k < m \) via resolution}
Resolve the clause \((p_{m,1} \lor p_{m,2}\lor \dots\lor p_{m,n}\lor b_m)\) (totality 1) with $\lnot{p_{m,n}}\lor \lnot{p_{k,n}}$, the resulting clause with all clauses \(\lnot{p_{m,j}} \lor p_{m,n}\) from step 2, to get
\(b_{m} \lor p_{m,n}\lor \lnot{p_{k,n}}\).
Then resolve $b_{m} \lor p_{m,n}\lor \lnot{p_{k,n}}$ again with the injectivity clause \(\lnot{p_{m,n}} \lor \lnot{p_{k,n}}\), then the result with clause \(\lnot{b_{m}} \lor b_{k}\) (from step~1), and again this latter result with clause \(\lnot{b_{k}} \lor \lnot{p_{k,n}}\) (totality 2). The final result is \(\lnot{p_{k,n}}\).

The clauses $\lnot{p_{k,n}}$ subsume the clauses in $\Inj$ of the form $\lnot{p_{m,n}}\lor \lnot {p_{k,n}}$ and all the intermediate clauses from the previous resolution steps. Therefore we use \cref{lmm:cleanup} to be able to ignore the subsumed clauses.

The current database of clauses is $\Gamma_3$ is equal to 
\[
   \FF{m}{n} \cup \BB\cup \PP \cup \{\lnot{p_{k,n}} : 1\leq k < m\} \setminus \{\lnot{p_{m,n}} \lor \lnot{p_{k,n}} : 1\leq k < m\} .
\]

\STEP{Assign pigeon \(m\) to hole \(n\): derive unit clauses  \(p_{m,n}\) and
  \(\lnot{b_{m}}\)}
The goal is to enforce pigeon \(m\) to be mapped to hole \(n\), by deriving the clause \(p_{m,n}\) using the \CPR rule. Then we get \(\lnot{b_{m}}\) immediately by resolving
\(p_{m,n}\) with \(\lnot{p_{m,n}} \lor \lnot{b_{m}}\) (totality~2).

The unit clause $p_{m,n}$ is \CPR \wrt \(\Gamma_3\), using partial assignment \[\sigma=\{p_{m,n} = 1, b_{m} = 0\}\] as witness. 

Clearly $\sigma$~satisfies the cost condition.
To see that the redundancy condition holds as well, we need to show
that $\Gamma_{3} \rest {\overline C} \vdash_1 D \rest \sigma$ for all~$D$
in~$\Gamma_3$ that contain $\lnot{p_{m,n}}$, but the only such clause
that remains in \(\Gamma_3\) is
\(\lnot{p_{m,n}} \lor \lnot{b_{m}}\), which is satisfied by
\(\sigma\).
The current database of clauses is $\Gamma_4=\Gamma_3\cup \{p_{m,n},\lnot{b_m}\}$.

\STEP{Derive $\lnot{p_{m,1}},\dots, \lnot{p_{m,(n-1)}}$ by \CSR}
We can derive them in any order using as witnessing substitution of the \CSR rule  the assignment $\sigma$ setting $p_{m,n}=1$, $p_{m,1}=\dots=p_{m,(n-1)}=0$, and $b_m=0$. 

The cost condition is immediate, and the redundancy condition follows    from the fact that $\Gamma_4{\upharpoonright}_\sigma\subseteq \Gamma_4$.

\STEP{Reduction to \(m-1\) pigeons and \(n-1\) holes}
First we derive by unit propagation all the the totality clauses of $\FF{m-1}{n-1}$. That is, we remove the hole $n$ from the totality axioms of the pigeons $1,\dots,m-1$ in the current database.
Now, the current database is $\FF{m-1}{n-1}$, the unit clauses \(\lnot{b_{m}}\), $p_{m,n}$,
$\lnot{p_{k,n}}$ for \(k\neq m\) and $\lnot{p_{m,j}}$ for $j\neq n$, and clauses that are subsumed by one of these unit clauses. Therefore by \cref{lmm:cleanup} we can ignore all the unit clauses and  all the clauses subsumed by them. That is we can carry on the derivation using only $\FF{m-1}{n-1}$.

Thus steps (1)--(6) are repeated $n-1$ times, up to derive \(\FF{m-n}{0}\).

The unit clauses derived in the whole process include
\begin{itemize}
  \item $b_{1},\dots, b_{(m-n)}$  (totality clauses in \(\FF{m-n}{0}\)).
  \item $\lnot{b_{n+1}},\dots,\lnot{b_{m}}$ (derived at each step of the induction),
  \item  $\lnot{b_{i,k,j}}$ for all $i<k$ and~$j$ (derived at the preprocessing).
\end{itemize}
Therefore $\cost{\BPHP{m}{n}}=m-n$.
\end{proof}

\section{MaxSAT Resolution \Plus Redundancy}
\label{sec:maxsat_resolution_redundancy}

We conclude this article showing a natural strengthening of MaxSAT
resolution, based on the redundancy rules and calculi from
\cref{sec:redundancy-rules}.

In \emph{MaxSAT resolution}~\cite{BLM.07} a proof starts from a set
$H_0$ of hard clauses and a multiset of soft clauses \(S_{0}\). At
each step \(i\), new sets \(H_i,S_{i}\) are derived according to some
rules that keep invariant the minimum number of clauses in \(S_{i}\)
that must be falsified for any truth assignment satisfying $H_0$.

If at some step there are \(k\) copies of \(\bot\) among the soft
clauses, it means that at least \(k\) clauses from \(S_{0}\) will be
falsified by any assignment satisfying $H_0$.
Resolution rules are not good enough, for example if \(A \lor x\) and
\(B \lor \lnot{x}\) are in \(S_{i}\), just setting \(S_{i+1}:= S_{i}
\cup \{A \lor B\}\) does not guarantee cost invariance.
There are several equivalent ways to describe the rules of MaxSAT resolution.
We use those from~\cite{AL.23}, first used in the context of
MaxSAT in~\cite{BL.20}.
From \cref{obs:softliterals} we can assume MaxSAT instances
to be of the form $H_0\land S_0$, where $H_0$ is a set of hard clauses
and the multiset of soft clauses $S_0$ equals the set of \emph{unit} clauses
$\{\lnot b_1,\dots,\lnot b_m\}$.

A MaxSAT resolution derivation is then a sequence of pairs $(H_i,S_i)$
such that
\begin{enumerate}[label=(\alph*)]
  \item \(S_i := S_{i-1}\) and \(H_{i}:= H_{{i-1}} \cup \{C \}\), where \(C\) is deduced by
  resolution from \(H_{i-1}\);
  \item \(S_{i}:=S_{i-1} \cup \{C\}\) where
  \(C \in H_{i-1}\) and $H_i:=H_{i-1}$;
  \item \(S_{i}:=S_{i-1}\setminus\{C\} \cup \{C \lor x,
  C \lor \lnot{x}\} \) where \(C \in S_{i-i}\) and $H_i:=H_{i-1}$;
  \item \(S_{i}:=S_{i-1}\setminus\{C \lor x,C \lor \lnot{x}\} \cup \{C\} \) where
\(\{C \lor x,C \lor \lnot{x}\} \subseteq S_{i-i}\) and $H_i:=H_{i-1}$.
\end{enumerate}
If the derivation obtains \(k\) copies of $\bot$ in some $S_i$, then
this is a certificate of the fact that the original instance has cost
at least $k$. This system is complete as well, in the sense that from
an instance of minimum cost \(k\) there is always a derivation of
\(k\) copies of \(\bot\). Soundness and completeness of MaxSAT
resolution are proved in~\cite{BLM.07}.

As a deduction system on the soft clauses, MaxSAT resolution is simulated by resolution~\cite[Theorem 17]{BLM.07}.
Therefore, to build inference systems stronger than MaxSAT resolution, we introduce redundancy rules
(\cref{def:CSR}).

\begin{defi}[MaxSAT resolution \Plus \CSR]\label{def:maxsat_redundancy}
Let $F=H\land S$ be a MaxSAT instance with hard clauses $H=\{C_1\lor b_1,\dots,C_m\lor b_m\}$ and soft clauses $S=\{\lnot b_1,\dots,\lnot b_m\}$.	
A derivation in the system \emph{MaxSAT resolution $+$ \CSR} is a sequence of pairs of multisets of clauses $(H_i,S_i)$, where $H_i$ and $S_i$ are obtained from $H_{i-1}$ and $S_{i-1}$ either using one of the MaxSAT resolution rules (a)--(d) above or
\begin{enumerate}
\item[(a')] \(S_i := S_{i-1}\) and $H_i:=H_{i-1}\cup \{C\}$, where $C$ is \CSR w.r.t. $H_{i-1}$.
\end{enumerate}
Assuming \(H_{0}\) to be satisfiable, we say that the system proves
that the instance has cost at least \(k\) if there is some step \(i\)
where \(S_{i}\) contains \(k\) copies of \(\bot\).
Analogous systems can be defined strengthening MaxSAT resolution with
any of \CPR, \CSPR, \CLPR, \CBC.\@ 
\end{defi}

\cref{def:maxsat_redundancy} extends the systems from
\cref{sec:redundancy-rules}, but those systems formally prove
that cost is at least \(k\) by deriving \(k\) hard unit clauses
\(b_{i_{1}}, b_{i_{2}}, \ldots, b_{i_{k}}\), while here we want to
prove $k$ copies of $\bot$ instead.
This is not an issue since we can copy these
$b_{i_{1}}, b_{i_{2}}, \ldots, b_{i_{k}}$ in the database of soft
clauses, using rule~(b), and resolve them with the corresponding soft
clauses
\(\lnot{b_{i_{1}}}, \lnot{b_{i_{2}}}, \ldots, \lnot{b_{i_{k}}}\),
using rule~(d), to get $k$ empty clauses $\bot$.
That is, the extension MaxSAT resolution + $P$ p-simulates the original
calculus $P$ and therefore is also complete when $P$ is either \CSPR,
\CPR, or \CSR (\cref{thm:SPR-completeness}).
The completeness of MaxSAT resolution + \CBC and \CLPR follows instead
from the completeness of MaxSAT resolution itself~\cite{BLM.07}.

\begin{thm}[completeness]
  Let $P$ denote any of \CSR, \CPR, \CSPR, \CLPR, \CBC.\@ The system
  MaxSAT resolution \Plus $P$ is complete.
\end{thm}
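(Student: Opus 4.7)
The plan is to split the argument into two cases depending on which redundancy system $P$ is added, since the completeness mechanism is different in each case.

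The first case handles $P \in \{\CSR, \CPR, \CSPR\}$. Here the strategy is to show that MaxSAT resolution $+$ $P$ $p$-simulates the calculus $P$ alone, so that completeness follows from \cref{thm:SPR-completeness}. Given a MaxSAT instance $F = H \land S$ with blocking variables encoding and $\cost{F} = k$, I would start by invoking \cref{thm:SPR-completeness} on the hard part $H_0 = H$ to produce a $P$-derivation of $k$ distinct unit blocking literals $b_{i_1}, \dots, b_{i_k}$ together with $\lnot b_j$ for $j \notin \{i_1,\dots,i_k\}$, entirely inside $H_0$. Each step of this derivation is either a resolution step (simulated by rule (a)) or a $P$-redundancy step (simulated by rule (a')). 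At the end, $H_i$ contains the unit hard clauses $b_{i_1}, \dots, b_{i_k}$. Then I would apply rule (b) $k$ times to copy each unit $b_{i_\ell}$ into the soft set $S_i$, and finally use rule (d) (MaxSAT resolution) to resolve each $b_{i_\ell}$ with the original soft clause $\lnot b_{i_\ell} \in S_0$, producing $k$ copies of $\bot$ in the soft multiset.

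The second case handles $P \in \{\CLPR, \CBC\}$. Here I cannot use the approach above, since those systems are incomplete by \cref{thm:CLPR-incompleteness}. Instead, I would just appeal to the completeness of MaxSAT resolution itself (\cite{BLM.07}): since every derivation of plain MaxSAT resolution is also a valid derivation in MaxSAT resolution $+$ $P$ (using only the rules (a)--(d) and ignoring (a')), the extension is at least as powerful and therefore also complete.

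The only subtlety I expect is a bookkeeping one: ensuring that rule~(b), which copies a hard clause into the soft multiset, is applicable to the derived units $b_{i_\ell}$, and that the original soft unit clauses $\lnot b_{i_\ell}$ are still present in $S_i$ when we want to resolve against them at the end. Both are immediate from the fact that rules (a) and (a') of \cref{def:maxsat_redundancy} leave $S$ untouched, so $S_i = S_0$ throughout the simulation phase; afterwards, rules (b) and (d) are applied $k$ times each in a straightforward way. No cost bookkeeping is needed beyond what \cref{thm:soundness} and \cref{thm:SPR-completeness} already give.
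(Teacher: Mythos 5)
Your proposal is correct and follows essentially the same route as the paper: for $P\in\{\mathrm{cost\text{-}SR},\mathrm{cost\text{-}PR},\mathrm{cost\text{-}SPR}\}$ the paper likewise p-simulates the standalone calculus $P$ via rules (a)/(a'), copies the derived units $b_{i_\ell}$ into the soft multiset with rule (b), and resolves them against the soft units $\lnot b_{i_\ell}$ with rule (d) to obtain $k$ copies of $\bot$, invoking \cref{thm:SPR-completeness}; for $P\in\{\mathrm{cost\text{-}LPR},\mathrm{cost\text{-}BC}\}$ it likewise falls back on the completeness of plain MaxSAT resolution. Your bookkeeping remark that $S_i=S_0$ throughout the simulation phase is exactly the observation the paper relies on implicitly.
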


It is easier to prove the soundness of MaxSAT resolution + \CSR if
we first put the derivations in a convenient normal form.

\begin{prop}[normal form]
\label{thm:normal-form}
In a MaxSAT resolution \Plus \CSR derivation we can always assume, at the
cost of a polynomial increase in size of the derivations, that the
rules (a) and (a') are applied before the rules (b)--(d).
\end{prop}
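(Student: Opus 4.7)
The plan is to establish a swap lemma: any adjacent pair in which a soft-clause rule (one of (b), (c), (d)) at step $i$ is followed by a hard-clause rule (one of (a), (a')) at step $i+1$ can be exchanged without altering the final configuration $(H_{i+1}, S_{i+1})$ or the validity of either inference. Iterating such swaps via a bubble-sort argument moves all (a)/(a') applications to the front of the derivation, producing the required normal form.

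The key observation supporting the swap is that rules (b), (c), (d) do not modify the hard-clause set, so $H_i = H_{i-1}$. Consequently, the validity of rule (a) or (a') at step $i+1$ depends only on $H_i = H_{i-1}$ and is preserved when the rule is moved to step $i$: for (a), the two resolution premises still lie in $H_{i-1}$; for (a'), the CSR witness $\sigma$ and both conditions of \cref{def:CSR} (the unit-propagation redundancy and the cost inequality) are stated relative to the unchanged set $H_{i-1}$, so the same witness still certifies redundancy. In the opposite direction, applying (a)/(a') first yields $H_i' = H_{i-1}\cup\{D\}$, a superset of $H_{i-1}$. This does not invalidate rule (b), which only requires a clause to be present in the hard-clause set (a monotone condition), and rules (c), (d) act exclusively on the soft-clause multiset and are thus trivially unaffected.

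A routine bookkeeping check confirms that after the swap both schedules produce $H_{i+1}' = H_{i-1}\cup\{D\} = H_{i+1}$ and the same (b)/(c)/(d) modification of $S_{i-1}$, giving $S_{i+1}' = S_{i+1}$. Hence the swapped prefix is still a valid MaxSAT resolution \Plus \CSR derivation with identical state after step $i+1$. Bubble-sorting the derivation then terminates in $O(t^2)$ swaps for a derivation of length $t$, and crucially the derivation length is preserved at each step; the ``polynomial increase'' in the statement is a generous upper bound covering at most the rewriting and re-checking of CSR witnesses.

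The only place the argument could plausibly fail is confirming that moving an (a') step earlier never invalidates its CSR witness. This is safe because adjacent-only swaps preserve the relative order of all hard-clause rules among themselves, so any (a) or (a') step on which a later (a') could depend remains to its left; together with the observation that $H_{i-1}$ is unchanged by any intervening (b)/(c)/(d) step, both the unit-propagation clause $\Gamma{\upharpoonright}_{\lnot D}\vdash_1 (\Gamma\cup\{D\}){\upharpoonright}_\sigma$ and the cost-preservation inequality $\cost{\tau\circ\sigma}\le \cost{\tau}$ carry over verbatim. This is the main technical point, but once isolated it is essentially immediate.
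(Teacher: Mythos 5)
Your proposal is correct and rests on exactly the two observations the paper's own (much terser) proof uses: rules (b)--(d) leave the hard-clause set unchanged, so an (a)/(a') step sees the same $H$ after being moved earlier, and enlarging $H$ never invalidates (b)--(d). The bubble-sort formalization is just a more explicit rendering of the same argument, so no further comparison is needed.
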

\begin{proof}
  Having more hard clauses in the database never prevents applications
  of the rules (b)--(d), and furthermore rules (a) and (a') do not
  depend in any way on the soft clauses in it. Therefore we can assume
  all the hard clauses to be derived before any manipulation of the
  soft clauses.
\end{proof}

\begin{thm}[label=thm:soundness-max_sat_extended,name={soundness}]
  The system MaxSAT resolution \Plus \CSR is sound, \ie if MaxSAT
  resolution \Plus \CSR can derive $\bot$ as a soft clause with
  multiplicity $k$ from a formula $F=H_0\land S_0$ with satisfiable hard clauses
  $H_0=\{C_1\lor b_1,\dots,C_m\lor b_m\}$ and soft clauses
  $S_0=\{\lnot b_1,\dots,\lnot b_m\}$, then the number of clauses that need to be falsified in $S_0$ is at least $k$, or equivalently $\cost{H_0}\geq k$.
\end{thm}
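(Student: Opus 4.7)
The plan is to reduce the soundness of MaxSAT resolution \Plus \CSR to the already-known soundness of plain MaxSAT resolution~\cite{BLM.07}. The strategy is to first use \cref{thm:normal-form} to assume, at no asymptotic cost in proof length, that the derivation is in normal form: all applications of rules (a) and (a$'$) come first, producing a final hard-clause set $H^\star$; the remainder of the derivation then uses only the soft-clause rules (b)--(d), transforming $(H^\star, S_0)$ into some $(H^\star, S_t)$ with $k$ copies of $\bot$ in the soft part $S_t$.

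Next I would show that $\cost{H_0} = \cost{H^\star}$ by induction along the hard-clause prefix of the normal-form derivation. Rule (a) adds a resolution consequence of $H_{i-1}$, so the set of satisfying assignments, and thus $\cost{\cdot}$, is unchanged. Rule (a$'$) adds a clause $C$ that is \CSR with respect to $H_{i-1}$, and \cref{lem:CSR-redundant} yields $\cost{H_{i-1}} = \cost{H_{i-1} \cup \{C\}}$, where $\cost{\cdot}$ is taken with respect to the implicit unit soft clauses $\lnot b_1, \dots, \lnot b_m$. Iterating yields $\cost{H_0} = \cost{H^\star}$.

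Finally, I would invoke the soundness of plain MaxSAT resolution: starting from $(H^\star, S_0)$, rules (b)--(d) preserve the minimum number of soft clauses falsified by any total assignment satisfying the hard clauses~\cite{BLM.07}. Since $(H^\star, S_t)$ contains $k$ copies of $\bot$ in the soft part, every assignment satisfying $H^\star$ must falsify at least $k$ clauses of $S_0 = \{\lnot b_1, \dots, \lnot b_m\}$; by construction this quantity is exactly $\cost{H^\star}$. Combining with the previous step gives $\cost{H_0} = \cost{H^\star} \geq k$, as required.

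The only subtle point---precisely what the normal-form proposition resolves---is that \cref{lem:CSR-redundant} guarantees cost-preservation of \CSR only with respect to the \emph{original} implicit soft clauses $\{\lnot b_i\}$, and not with respect to a drifted multiset $S_i$ that could arise if \CSR steps were interleaved with soft-clause manipulations. Putting the derivation in normal form ensures that whenever rule (a$'$) fires we still have $S_{i-1} = S_0$, so the hypothesis of the lemma genuinely applies at each \CSR step; this is the main conceptual obstacle, and once it is handled the argument is essentially bookkeeping on top of the known soundness of MaxSAT resolution.
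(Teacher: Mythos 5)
Your proof is correct and follows essentially the same route as the paper's: both rely on \cref{thm:normal-form} to separate the hard-clause phase (handled via \cref{lem:CSR-redundant}, which keeps $\cost{H_i}$ constant) from the soft-clause phase (handled by the soundness of plain MaxSAT resolution), the paper merely packaging the two phases as a single induction on the invariant $\min_{\alpha\,\models\,H_i} c(\alpha,S_i)=\cost{H_0}$. Your closing remark on why the normal form is needed matches the paper's implicit reasoning.
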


 \begin{proof}
  Consider a MaxSAT instance  resolution + \CSR derivation
  $(H_0,S_0),\dots,(H_t,S_t)$ where \(S_{t}\) contains $k$ copies of
  \(\bot\).

  Given a total assignment $\alpha$, let $c(\alpha,S_i)$ be the number
  (counted with multiplicity) of clauses in $S_i$ falsified by
  $\alpha$. That is, in particular, for any total assignment $\alpha$,
  $c(\alpha,S_t)\geq k$.
  By induction on $i$, we will show that
  \begin{equation}\label{eq:stepi}
    \min_{\alpha\;:\;\alpha\;\models\;H_i} c(\alpha,S_i) = \cost{H_0}\;.
  \end{equation}
  Because of \cref{thm:normal-form} we can assume that up to
  some step \(t_{0}\), included, the proof only uses rules (a) or
  (a'), and from step \(t_{0}+1\) to \(t\), the proof only uses rules
  (b)--(d).

  By the soundess of \(\CSR\) (\cref{lem:CSR-redundant}) we have that when \(i \leq t_{0}\),
  \(H_{i}\) is satisfiable and \(\cost{H_{i}}=\cost{H_{0}}\).
  Furthermore \(S_{i}=S_{0}\). Hence
  \begin{equation*}
    \min_{\alpha\;:\;\alpha\;\models\;H_i} c(\alpha,S_i) =
    \min_{\alpha\;:\;\alpha\;\models\;H_i} c(\alpha,S_0) =
    \cost{H_i} = \cost{H_0}\;.
  \end{equation*}
  for \(i \leq t_{0}\). For \(i>t_{0}\) we applied one of rules
  (b)--(d) above, and
  \[
	\min_{\alpha\;:\;\alpha\;\models\;H_i}c(\alpha,S_i)=\min_{\alpha\;:\;\alpha\;\models\;H_{i-1}}c(\alpha,S_{i-1})\stackrel{IH}{=}\cost{H_0}\ ,
  \]
  where the first equality is the soundness of MaxSAT resolution.
\end{proof}

\section{Conclusions and open problems}
\label{sec:conclusions}
We proposed a way to extend redundancy rules, originally introduced for
SAT, into polynomially verifiable rules for MaxSAT.
We defined sound and complete calculi based on those rules and we
showed the strength of some of the calculi giving short derivations of
notable principles and we showed the incompleteness of the weaker ones and width lower bounds for the stronger ones.
We conclude this article with a list of open problems:
\begin{enumerate}
  \item The cost constraint for the redundancy rules is very strict, for example compared to the rule \texttt{CPR} in \cite{IBJ.22}. Indeed, \CPR enforces the check on the cost even on assignments falsifying the hard clauses of the formula.
   Is it possible to relax \CPR
  without giving up on efficient verification as in \cite{IBJ.22}?
  \item Does \CSR simulate MaxSAT resolution? That is, if we have
  a MaxSAT instance $\Gamma$ with blocking variables and MaxSAT
  resolution proves in size $s$ that $\cost{\Gamma}=k$, is there
  a proof of $\cost{\Gamma}=k$ in \CSR of size $\mathsf{poly}(s)$? Does \CSR simulate MaxSAT resolution + \CSR?
  \item We proved a width lower bound for \CSPR and an analogue of a width lower bound for \CSR on formulas with optimal assignments far from each other in the Hamming distance. We reiterate the open problem of proving size lower bounds for \CSPR and stronger systems.
\end{enumerate}

\section*{Acknowledgments}
\noindent The authors would like to thank the Simons Institute for the Theory of Computing: part of this work has been done during the \emph{Extended Reunion: Satisfiability} and the \emph{Meta-Complexity} programs (Spring 2023). Another part of this work has been done during the Oberwolfach workshop 2413 \emph{Proof Complexity and Beyond} and during the 2023 Workshop on Proof Theory and its Applications organized by the Proof Society.

\newcommand{\etalchar}[1]{$^{#1}$}

\end{document}